\renewcommand{\Re}{\mathbb{R}}
\newcommand{\N}{\mathbb{N}}
\newcommand{\U}{\mathcal{U}}
\newcommand{\Y}{\mathcal{Y}}
\newcommand{\X}{\mathcal{X}}
\newcommand{\OCal}{\mathcal{O}}
\newcommand{\G}{\mathcal{G}}
\newcommand{\I}{\mathcal{I}}
\newcommand{\V}{\mathcal{V}}
\newcommand{\E}{\mathcal{E}}
\newcommand{\B}{\mathcal{B}}
\newcommand{\intnew}{\mathrm{int}}
\newcommand{\R}{\mathcal{R}}
\DeclareMathOperator*{\maximize}{maximize}
\DeclareMathOperator*{\subjto}{subject~to}
\newtheorem{problem}{Problem}
\newtheorem{theorem}{Theorem}
\newtheorem{proposition}{Proposition}
\newtheorem{corollary}{Corollary}
\title{\Huge Path Planning using Positive Invariant Sets} 
\author{Claus Danielson, Avishai Weiss, Karl Berntorp, and Stefano {Di Cairano} %
\thanks{C. Danielson, A. Weiss, K. Berntorp, and S. {Di Cairano} are with Mitsubishi Electric Research Laboratories, Cambridge MA, USA}
\thanks{e-mail: {\tt danielson@merl.com}}
} 
\begin{document}

\maketitle

\begin{abstract}
We present an algorithm for steering the output of a linear system from a feasible initial condition to a desired target position, while satisfying input constraints and non-convex output constraints.  The system input is generated by a collection of local linear state-feedback controllers.  The path-planning algorithm selects the appropriate local controller using a graph search, where the nodes of the graph are the local controllers and the edges of the graph indicate when it is possible to transition from one local controller to another without violating input or output constraints.  We present two methods for computing the local controllers.  The first uses a fixed-gain controller and scales its positive invariant set to satisfy the input and output constraints.  We provide a linear program for determining the scale-factor and a condition for when the linear program has a closed-form solution.  The second method designs the local controllers using a semi-definite program that maximizes the volume of the positive invariant set that satisfies state and input constraints.  We demonstrate our path-planning algorithm on docking of a spacecraft.  The semi-definite programming based control design has better performance but requires more computation.  
\end{abstract}

\section{Introduction}

In path-planning, the goal is to generate a trajectory between an initial state and a target state while avoiding obstacle collision. This is a computationally difficult problem \cite{reif1979complexity}. Recently, there has been much work on {sampling-based} motion planning, where the search space of possible trajectories is reduced to a graph search amongst randomly selected samples \cite{lavalle2006planning, lavalle2001randomized, karaman2011sampling}. Sampling-based planners have been applied to humanoid robots \cite{kuffner2002dynamically}, autonomous driving \cite{leonard2008perception}, robotic manipulators \cite{perez2011asymptotically}, and spacecraft motion problems \cite{frazzoli2003quasi, starek2016real}.

There are still many practical challenges in applying these methods to systems with complex dynamics in high dimensional spaces, since the algorithms must account for the dynamics and constraints of the system \cite{karaman2011sampling}. Instead, traditional sampling-based methods solve a {two-point boundary-value problem}, for each edge of the search graph, to find a feasible input and state trajectory.  This \emph{a posteriori} approach to satisfying the constraints and dynamics is computationally challenging \cite{vinter2010optimal}.  Furthermore, for systems subject to unmeasured disturbances, model uncertainty, and unmodeled dynamics, there is no guarantee that the resulting open-loop trajectories will satisfy the system output constraints. 

In \cite{Weiss2014} a path planner, based on the use of positively invariant sets, was developed for the spacecraft obstacle avoidance problem.  Unlike the aforementioned planners, the focus is not on optimal trajectory planning or precise reference tracking.  Instead, the planner in  \cite{Weiss2014} implicitly finds a state trajectory from the initial state to the target state that explicitly satisfies the system dynamics and constraints.  The algorithm uses a graph to switch between a collection of local feedback controllers.  Positive invariant sets are used to determine when it is possible to transition from one controller to another, without violating input or output constraints.  A set is positive invariant if any closed-loop state trajectory that starts inside the set remains in the set for all future times.  By choosing positive invariant subsets of the input and output constraint sets, it is possible to guarantee that the closed-loop trajectories satisfy the constraints.  Using a graph and simple feedback controllers to generate the control input ensures that the algorithm has low computational complexity.  This concept can be extended to systems with set-bounded disturbances and differential inclusion model uncertainty, by using robust positive invariant sets \cite{Kolmanovsky1998, Kerrigan2000}.  By sampling feedback controllers, as opposed to points in the output space, the planner inherently accounts for the dynamics of the system and produces constraint feasible trajectories.  A similar idea is used in \cite{arslan2016sampling, McConley2000, Blanchini2004}.  

In this paper, we extend the path-planning algorithm presented in \cite{Weiss2014}.  Our analysis provides a sufficient condition for the existence of a path from the initial output to the target output that satisfies the system dynamics and constraints.    Furthermore we provide conditions, under which, our path-planning algorithm solves the path-planning problem.   We introduce two methods for computing local controllers and their associated positive invariant sets.   The first method follows the idea from \cite{Weiss2014}.  It uses a fixed-gain controller and scales its positive invariant set to guarantee input and output constraint satisfaction. In this paper, however, we consider the output constraints as a union of convex sets that represents the free space, and provide a linear program (LP) for determining the scale-factor. We also provide a condition for when this linear program has a closed-form solution. In the second method, we design the local controllers using a semi-definite program (SDP) that maximizes the volume of the positive invariant set that satisfies state and input constraints.  This approach increases the number of edges in the controller graph and can potential provide better performance, albeit at the expense of solving a computationally burdensome SDP in place of a simple LP.

The paper begins by describing the path-planning problem in Section~\ref{sec:path-planning-problem} along with a condition of the existence of a solution. Section~\ref{sec:path-planning-algorithm} details our path-planning algorithm for solving the path-planning problem. Two new methods for designing the local controllers and associated positive invariant sets are presented in Section~\ref{sec:control-design}. Finally, we demonstrate our path-planning algorithm on the docking of a spacecraft in Section~\ref{sec:example}.

\subsection{Notation and Definitions}

A ball $\B(c,r) = \{ x \in \Re^n : \| x - c \|_2 \leq r \} \subseteq \Re^n$ is the set of all points $x \in \Re^n$ whose Euclidean distance from the center $c \in \Re^n$ is less than the radius $r \in \Re_+$.  A polytope $\X = \{ x \in \Re^n: H^j x \leq K^j \text{ for } j = 1,\dots,m \} \subseteq \Re^n$ is the intersection of a finite number of half-spaces.  A polytope $\X$ is called full-dimensional if it contains a ball $\B(c,r) \subseteq \X$ with positive radius $r >0$.  The Chebyshev radius of a full-dimensional polytope $\X$ is the radius $r>0$ of the largest ball $\B(c,r) \subseteq \X$ contained in $\X$.  A point $x \in \X$ is in the interior of $\X$ if there exists a radius $r>0$ such that the ball $\B(x,r) \subseteq \X$ is contained in the set $\X$.  The set of all points $x \in \X$ in the interior of $\X$ is denoted by $\intnew(\X)$.  The image of the set $\X \subseteq \Re^n$ through the matrix $A \in \Re^{n \times n}$ is $A \X = \{ A x : x \in \X\}$.  The Pontryagin difference of $\X \subseteq \Re^n$ and $\Y \subseteq \Re^n$ is the set $\X \ominus \Y = \{ x : x+y \in \X, \forall y \in \Y\}$.  

A set $\OCal$ is positive invariant (PI) for the autonomous system $x(t+1) = f(x(t))$ if for every state $x(t) \in \OCal$ we have $x(t+1) = f(x(t)) \in \OCal$.  If $V(x)$ is a Lyapunov function for the stable autonomous system $x(t+1) = f(x(t))$, then any level-set $\OCal = \{ x \in \Re^n : V(x) \leq l \}$ is a PI set since $V(f(x)) \leq V(x)$.  

For a system $x(t+1) = f(x(t),u(t))$ with input $u \in \U$ and state $x \in \X$ constraints, the set of states $x \in \R_N(\bar{x})$ reachable from some initial state $\bar{x} \in \X$ in $N \in \N$ steps is defined recursively as $\R_0(\bar{x}) = \{\bar{x}\}$ and
\begin{align*}
\R_{k+1} = \big\{ x \in \X : &x = f(z,u),  z \in \R_{k}, u \in \U \big\} \cup \R_{k}
\end{align*} 
for $k = 0,\dots,N-1$, where $\R_{k} = \R_{k}(\bar{x})$.  The infinite-horizon reachable set is the limit $\R_\infty(\bar{x}) = \lim_{N\rightarrow \infty} \bigcup_{k=0}^N \R_k(\bar{x})$.  If the system $x(t+1) = f(x(t),u(t))$ is locally controllable, $f$ is continuous, and the input set $\U$ and state set $\X$ are full-dimensional and contain a neighborhood of the origin, then there exists an time $N \in \N$ such that the reachable set $\R_N(\bar{x})$ is full-dimensional.  

A directed graph $\G = (\V,\E)$ is a set of vertices $\V$ together with a set of ordered pairs $\E \subseteq \V \times \V$ called edges.  Vertices $i,j \in \V$ are called adjacent if $(i,j) \in \E$ is an edge.  A path is a sequence of adjacent vertices.  A graph search is an algorithm for finding a path through a graph.  

\section{path-planning Algorithm}
\label{sec:path-planning}

In this section we define the path-planning problem and our algorithm for solving it.  Our algorithm extends the method presented in \cite{Weiss2014}.  

\subsection{Path-Planning Problem}
\label{sec:path-planning-problem}

Consider the following discrete-time linear system
\begin{subequations}
\label{eq:system}
\begin{align}
x(t+1) &= A x(t) + B u(t) \\
y(t)     &= C x(t)
\end{align}
\end{subequations}
where $x(t) \in \Re^{n_x}$ is the state, $u(t) \in \Re^{n_u}$ is the control input, and $y(t) \in \Re^{n_y}$ is the output.  The pair $(A,B)$ is assumed to be controllable and $\mathrm{rank}(C) = n_y$.  The input $u(t)$ and output $y(t)$ are subject to constraints
\begin{align*}
u(t) \in \U  \text{ and } y(t) \in \Y 
\end{align*}
where the input set $\U \subset \Re^{n_u}$ is a full-dimensional compact polytope
\begin{subequations}
\label{eq:constraints}
\begin{align}
\label{eq:input-set}
\U = \big\{ u : H_{u}^j u \leq K_{u}^j \text{ for } j = 1,\dots,m_{u} \big\}.
\end{align}  
The output set $\Y \subseteq \Re^{n_y}$ is generally non-convex, but can be described as the union of convex sets
\begin{align}
\label{eq:output-set}
\Y = \displaystyle{\bigcup}_{k \in \I_\Y} \Y_k
\end{align}
where the index set $\I_\Y$ is finite $| \I_\Y| < \infty$ and each component set $\Y_k \subseteq \Re^{n_y}$ is a full-dimensional compact polytope
\begin{align}
\label{eq:output-subsets}
\Y_k = \big\{ y : H_{y_k}^j y \leq K_{y_k}^j \text{ for } j = 1,\dots,m_{y_k}\big\}.
\end{align}  
\end{subequations}
We assume that each output $\bar{y} \in \Y$ corresponds to a feasible equilibrium of the system~\eqref{eq:system} i.e. for each output $\bar{y} \in \Y$ there exists a state $\bar{x} \in \Re^{n_x}$ and feasible input $\bar{u} \in \intnew(\U)$ such that 
\begin{align}
\label{eq:equilibrium}
\begin{bmatrix} A - I & B \\ C & 0 \end{bmatrix}
\begin{bmatrix} \bar{x} \\ \bar{u} \end{bmatrix}
=
\begin{bmatrix} 0 \\ \bar{y} \end{bmatrix}.
\end{align}
The equilibrium state $\bar{x}$ and input $\bar{u}$ may not be unique.

The objective of the path-planning problem is to drive the system output $y(t)$ from the feasible initial equilibrium output $y(0) = y_0 \in \intnew(\Y)$ to a desired target output $y(t) \rightarrow y_f \in \intnew(\Y)$.  The path-planning problem is formally stated below.

\begin{problem}
\label{prob:path-planning}
Find a feasible input trajectory $u(t) \in \U$ for $t \in \N$ that produces a feasible output trajectory $y(t) \in \Y$ that converges to the target output $\lim_{t \rightarrow \infty} y(t) = y_f$. \hfill $\square$
\end{problem}

Before detailing our algorithm for solving Problem~\ref{prob:path-planning}, we study the conditions for when a solution exists.  Consider the following graph $\G_\Y = (\I_\Y,\E_\Y)$ where the nodes of the graph are the indices $\I_\Y$ of the component sets $\Y_k$ that comprise $\Y$ in~\eqref{eq:output-set}.  An edge $(i,j) \in \E_\Y$ connects nodes $i,j \in \I_\Y$ if the intersection of the sets $\Y_i$ and $\Y_j$ has a non-empty interior $\intnew(\Y_i \cap \Y_j) \neq \varnothing$.  The following theorem provides a sufficient condition for the existence of a solution to Problem~\ref{prob:path-planning}.  

\begin{theorem}
\label{thm:solution-existence}
If the graph $\G_\Y$ has a path from a node $\Y_0$ containing the initial equilibrium output $y_0$ to a node $\Y_f$ containing the target output $y_f$ then Problem \ref{prob:path-planning} can be solved.  
\end{theorem}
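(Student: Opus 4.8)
The plan is to prove this constructively: given a path $\Y_0 = \Y_{k_0}, \Y_{k_1}, \dots, \Y_{k_N} = \Y_f$ in $\G_\Y$, I will build an explicit feasible trajectory that walks the output through the chain of overlapping polytopes, pausing at an equilibrium inside each successive overlap region before moving on. First I would observe that adjacency in $\G_\Y$ means $\intnew(\Y_{k_i} \cap \Y_{k_{i+1}}) \neq \varnothing$, so I can pick an intermediate output target $\hat{y}_i \in \intnew(\Y_{k_i} \cap \Y_{k_{i+1}}) \subseteq \intnew(\Y)$ for each $i = 0,\dots,N-1$, and set $\hat{y}_N = y_f$, $\hat{y}_{-1} = y_0$ (re-indexing the waypoints as $\hat y_0 = y_0$ for convenience). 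By the standing equilibrium assumption~\eqref{eq:equilibrium}, each $\hat{y}_i$ is realized by some feasible equilibrium pair $(\hat{x}_i, \hat{u}_i)$ with $\hat{u}_i \in \intnew(\U)$.

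The key step is to argue that the system output can be steered from one waypoint $\hat{y}_{i-1}$ to the next $\hat{y}_i$ while keeping $y(t)$ inside $\Y_{k_i}$ (the single convex polytope that contains both waypoints) and keeping $u(t) \in \U$. Here I would use controllability of $(A,B)$: the output can be driven from the equilibrium $\hat{x}_{i-1}$ to the equilibrium $\hat{x}_i$ in finitely many steps, and by a standard scaling/slowing-down argument — interpolating along a straight line in the space of equilibria and taking the transition sufficiently slowly — the state trajectory stays arbitrarily close to the line segment of equilibria connecting $\hat{x}_{i-1}$ to $\hat{x}_i$, hence its output stays arbitrarily close to the segment $[\hat{y}_{i-1}, \hat{y}_i]$, and the input stays arbitrarily close to $[\hat{u}_{i-1}, \hat{u}_i] \subset \intnew(\U)$. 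Since $\Y_{k_i}$ is convex and contains both endpoints, that segment lies in $\Y_{k_i}$; since both endpoints are in the \emph{interior} of the relevant constraint sets and the trajectory can be made to lie within any prescribed tube around the segment, feasibility $y(t) \in \Y_{k_i} \subseteq \Y$ and $u(t) \in \U$ is retained. Concatenating these $N$ finite-length maneuvers brings the output to $\hat y_N = y_f$ in finite time; holding the equilibrium input $\hat u_N = \bar u_f$ thereafter gives $y(t) = y_f$ for all subsequent $t$, so in particular $\lim_{t\to\infty} y(t) = y_f$, and Problem~\ref{prob:path-planning} is solved.

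The main obstacle is making the ``steer slowly and stay in a tube'' argument rigorous: one must show that a controllable linear system can be driven between two equilibria along a trajectory confined to an arbitrarily thin neighborhood of the connecting segment of equilibria. The clean way is to note that if $(\hat x, \hat u)$ and $(\hat x', \hat u')$ are equilibria, then every point on the segment $\lambda \mapsto (\hat x_\lambda, \hat u_\lambda)$ is also an equilibrium, and to design the maneuver as a sequence of small controllable hops between nearby equilibria on this segment; a finite-horizon minimum-energy (or deadbeat) control for each hop has norm that scales with the hop size, so by subdividing finely the entire state and input excursion off the segment is made smaller than the (positive) distance from the segment to the boundaries of $\Y_{k_i}$ and $\U$. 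Boundedness of these distances away from zero is exactly what the interior assumptions $\hat y_i \in \intnew(\Y)$ and $\hat u_i \in \intnew(\U)$, together with compactness, guarantee. Everything else — convexity of the $\Y_{k_i}$, finiteness of the path, closure under concatenation — is routine.
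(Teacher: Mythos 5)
Your proposal is correct and follows the same overall decomposition as the paper: pick waypoints in the interiors of the pairwise overlaps $\intnew(\Y_{k_i}\cap\Y_{k_{i+1}})$, connect them by segments of equilibria (which, by linearity and convexity, lie strictly inside a single component $\Y_{k_i}$ and strictly inside $\U$), and concatenate finitely many finite-time transfers. Where you genuinely diverge is in how the key lemma --- that the system can be steered between the two equilibria at the ends of a segment while staying feasible --- is established. The paper proves this topologically: it covers the compact segment of equilibrium states by the (full-dimensional, open) restricted reachable sets $\R_\lambda$, extracts a finite subcover, and inducts along the subcover; this argument uses only continuity and local controllability, which is why the paper can claim it extends to nonlinear dynamics. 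You instead give a constructive, quantitative argument: subdivide the equilibrium segment into small hops and use finite-horizon minimum-energy (or deadbeat) transfers whose state and input excursions scale linearly with the hop size, so that for a fine enough subdivision the trajectory stays within a tube whose radius is below the (positive, by interiority plus compactness) distance from the segment to the boundaries of $\Y_{k_i}$ and $\U$. Your version is more elementary and yields explicit bounds on the trajectory's deviation and on the transfer time, but it leans on linearity twice (convex combinations of equilibria are equilibria, and the minimum-energy control norm scales linearly via the pseudo-inverse of the controllability matrix), so it does not generalize as directly as the paper's covering argument. One small point shared with the paper's proof: you should note explicitly that $y_0\in\intnew(\Y)$ is taken, without loss of generality, to mean $y_0\in\intnew(\Y_{k_0})$ for the particular component starting the path (and likewise for $y_f$), since membership in the interior of the union does not by itself place a point in the interior of a given component.
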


\begin{proof}
First we show that the connectivity of the graph $\G_\Y$ implies that the non-convex set $\Y$ contains, strictly in its interior, a path from the initial equilibrium output $y_0$ to the target output $y_f$.  For notational simplicity, we will assume that the path through the graph $\G_\Y$ from a node containing $y_0$ to a node $y_f$ is labeled $\Y_1,\dots,\Y_{f-1}$.  By definition of the graph edges, the intersection of sets $\Y_i$ and $\Y_{i+1}$ for $i = 0,\dots,f-1$ has a non-empty interior $\intnew(\Y_i \cap \Y_{i+1}) \neq \varnothing$.  Define $y_{i}$ for $i=1,\dots,f-1$ as the Chebyshev center of the set $\Y_i \cap \Y_{i+1}$.  By definition of the Chebyshev ball, the points $y_{i}$ for $i = 1,\dots,f-1$ are in the interiors of the sets $\Y_i$ and $\Y_{i+1}$.  Furthermore, by assumption the points $y_0 \in \intnew(\Y)$ and $y_f \in \intnew(\Y)$ are contained in the interior of $\Y$.  Without loss of generality we can assume that $y_0 \in \intnew(\Y_1)$ is contained in the interior of $\Y_1$ and $y_f \in \intnew(\Y_f)$ is contained in the interior of $\Y_f$.  Thus $y_i \in \intnew(\Y)$ for $i = 0,\dots,f$.  By convexity, this implies that the line segments $[y_i,y_{i+1}] \subset \intnew(\Y)$ for $i = 0,\dots,f-1$ are contained strictly inside $\Y$.  Furthermore these line segments form a path of outputs from the initial equilibrium output $y_0$ to the target output $y_f$.  

Next we show that the endpoint $y_{i+1}$ of each line segment $[y_i,y_{i+1}]$ for $i = 0,\dots,f-1$ is reachable in finite time.  Since the dynamics are continuous, there exists a connected path of equilibrium states $\bar{x}_\lambda = f_{\bar{x}}(\lambda)$ and inputs $\bar{u}_\lambda = f_{\bar{u}}(\lambda)$ for each output $\bar{y}_\lambda = (1-\lambda) y_i + \lambda y_{i+1}$ along the line segment $[y_i,y_{i+1}]$ where $\lambda \in [0,1]$.  The input set $\U$ and state set $\X = \{ x : C x \in \Y \}$ are full-dimensional since $\Y$ is full-dimensional and $\mathrm{rank}(C) = n_y$.  Thus for each $\lambda \in [0,1]$ the reachable set $\R_\infty(\bar{x}_\lambda)$ is full-dimensional since the dynamics are locally controllable and continuous, and the sets $\X\ominus\{\bar{x}_\lambda\}$ and $\U\ominus\{\bar{u}_\lambda\}$ are full-dimensional and contain the origin in their interiors.  Hence the collection of reachable sets $\R_\infty(\bar{x}_\lambda)$ cover the set of equilibrium state $\{ \bar{x}_\lambda : \lambda \in [0,1]\}$.  Let us define the following collection of restricted reachable sets
\begin{align*}
\R_\lambda = \big\{ x \in \intnew\big(\R_\infty(\bar{x}_\lambda)\big) : (y_{i+1}-y_i)^T (Cx-\bar{y}_\lambda) > 0 \big\}. 
\end{align*}
This is the set of states $x \in \intnew\big(\R_\infty(\bar{x}_\lambda)\big)$ reachable from $\bar{x}_\lambda$ in finite time-steps whose output $y = Cx$ is ``closer'' to $y_{i+1}$ than $\bar{y}_\lambda$.  These sets $\R_\lambda$ are open since they are the intersect of two open sets: $\intnew\big(\R_\infty(\bar{x}_\lambda)\big)$ and $\{ x : (y_{i+1}-y_i)^T (Cx-\bar{y}_\lambda) > 0 \}$.  The sets $\R_\lambda$ are full-dimensional since $\bar{x}_\lambda \in \intnew\big(\R_\infty(\bar{x}_\lambda)\big)$.  Furthermore the infinite collection of sets 
\begin{align*}
\R_\infty(\bar{x}(0)) \text{ and } \R_\lambda \text{ for } \lambda \in [0,1] 
\end{align*}
covers the equilibrium states $\{ \bar{x}_\lambda : \lambda \in [0,1]\}$ since $\bar{x}_\lambda = f_{\bar{x}}(\lambda)$ depends continuously on $\lambda \in [0,1]$ and $\R_\lambda$ is full-dimensional and intersects every neighborhood of $\bar{x}_\lambda$.  

The set of equilibrium states $\{ \bar{x}_\lambda: \lambda \in [0,1] \}$ is compact since it is the continuous image of the compact set $[0,1]$.  Thus we can select a finite subcover $\R_\infty(\bar{x}_0),\R_{\lambda_0}, \dots, \R_{\lambda_L}$ of the equilibrium set.  Without loss of generality, we assume that $0 = \lambda_0 < \cdots < \lambda_L = 1$.  Since the sets $\R_{\lambda_0}, \dots, \R_{\lambda_L}$ cover the set of equilibrium states $\{ \bar{x}_\lambda : \lambda \in [0,1]\}$, there must exist $\R_{\lambda_i}$ that contains $\bar{x}_{\lambda_j}$ for each $j = 1,\dots,L$.  By definition of the restricted reachable sets $\R_\lambda$, we have $\lambda_i < \lambda_j$.  Thus the equilibrium state $\bar{x}_{\lambda_j}$ is reachable from $\bar{x}_{\lambda_i}$ where $\lambda_i < \lambda_j$ since $\R_{\lambda_i} \subset \R_\infty(\bar{x}_{\lambda_i})$.  By induction on $j$, this implies that the equilibrium state $\bar{x}_{\lambda_L} = \bar{x}_1$ corresponding to the endpoint $y_{i+1} = \bar{y}_{1}$ is reachable in finite time from the equilibrium state $\bar{x}_{\lambda_0} = \bar{x}_0$ corresponding to the begin point $y_{i} = \bar{y}_{0}$ of the line segment $[y_i,y_{i+1}]$.  

We conclude that the target output $y_f$ is reachable in a finite number of time-steps since there are a finite number of linear segments $[y_i,y_{i+1}]$ for $i=0,\dots,f-1$ and each endpoint $y_{i+1}$ is reachable in finite time.  Furthermore by the definition of the reachable sets $\R_\infty(\bar{x})$, the control inputs $u(t)$ needed to reach each end point $y_{i+1}$ are feasible $u(t) \in \U$ and the resulting state trajectory $x(t)$ is feasible $x(t) \in \X = \{ x : C x \in \Y \}$.  Therefore a solution to Problem \ref{prob:path-planning} exists.  
\end{proof}

The proof of Theorem \ref{thm:solution-existence} can be found in \cite{FullPaper}.  The proof does not rely on the linearity of the dynamics~\eqref{eq:system} nor the polytopic union structure of the constraints~\eqref{eq:constraints}.  Rather the proof uses the fact that the dynamics $f(x,u) = Ax + Bu$ are continuous and locally controllable about each equilibrium~\eqref{eq:equilibrium}, and that the constraints are full-dimensional.  Thus Theorem \ref{thm:solution-existence} can be extended to non-linear systems with more complicated constraints.  However in this paper we consider linear dynamics with polytopic union constraints, since our path-planning algorithm exploits these properties.

\subsection{Path-Planning Algorithm}
\label{sec:path-planning-algorithm}

In this section we present our algorithm for solving Problem~\ref{prob:path-planning}.  This algorithm extends the concept first presented in \cite{Weiss2014}.  

The control input $u(t)$ is selected from a collection of local linear state-feedback controllers of the form 
\begin{align}
\label{eq:local-control}
u_i = F_i (x -\bar{x}_i) + \bar{u}_i
\end{align}
for $i \in \I$, where $\I$ is the index set of the local controllers, and $\bar{x}_i$ and $\bar{u}_i \in \intnew(\U)$ are an equilibrium~\eqref{eq:equilibrium} state and input pair corresponding to the output $\bar{y}_i \in \intnew(\Y)$.  We assume that the local controller~\eqref{eq:local-control} asymptotically stabilizes its equilibrium point $\bar{x}_i$ i.e. the matrix $A+BF_i$ is Schur.  Thus each controller~\eqref{eq:local-control} has an associated ellipsoidal positive invariant (PI) set 
\begin{align}
\label{eq:pi-sets}
\OCal_i = \big\{ x \in \Re^{n_x} : (x-\bar{x}_i)^T P_i  (x-\bar{x}_i) \leq 1 \big\}
\end{align}
where $V(x) = (x-\bar{x}_i)^T P_i  (x-\bar{x}_i)$ is a quadratic Lyapunov function for the controller~\eqref{eq:local-control}.  We assume that the Lyapunov matrix $P_i \in \Re^{n_x \times n_x}$ is scaled such that for every state $x \in \OCal_i$ in the PI set $\OCal_i \subset \Re^{n_x}$, the output $y = C x \in \Y$ and input $u = F_i(x-\bar{x}_i) + \bar{u}_i \in \U$ are feasible.  Such a scaling is possible since $\bar{u}_i \in \intnew(\U)$ and $\bar{y}_i \in \intnew(\Y)$, and the set $\{ x : C x \in \Y \}$ is full-dimensional, since $\Y$ is full-dimensional and $\mathrm{rank}(C) = n_y$.  

The path-planning algorithm selects which local controller~\eqref{eq:local-control} to use based on a graph search.  The vertices $\I$ of the graph $\G = (\I,\E)$ are the indices of the local feedback controllers~\eqref{eq:local-control}.  Two controllers $i,j \in \I$ are connected by an edge $(i,j) \in \E$ if the equilibrium state $\bar{x}_i$ of controller $i \in \I$ is inside $\bar{x}_i \in \intnew(\OCal_j)$ the PI set $\OCal_j$ of controller $j \in \I$.  The presence of an edge $(i,j) \in \E$ means that it is possible to safely transition from controller $i \in \I$ to controller $j \in \I$ once the state $x(t)$ reaches a neighborhood of the equilibrium $\bar{x}_i$.  

We make the following assumptions about the controller graph $\G$:
\begin{enumerate}[\quad 1.]
\item[A1.] The set of controllers $\I$ contains at least one controller $u_f = F_f (x - \bar{x}_f) + \bar{u}_f$ corresponding to the target output $y_f = C \bar{x}_f \in \Y$.  
\item[A2.] The initial state $x(0) = x_0$ of the system~\eqref{eq:system} is contained in the PI set $\OCal_i$ of at least one controller $i \in \I$.  
\item[A3.] There exists a path through the graph $\G = (\I,\E)$ from a node containing the initial state $x(0) = x_0$ to a node corresponding to the target output $y_f$.  
\end{enumerate}

Our path-planning algorithm is summarized in Algorithm~\ref{alg:path-planning}.  Offline, the path-planning algorithm searches the controller graph $\G = (\I,\E)$ for a sequence $\sigma_0,\dots,\sigma_N \in \I$ of local controllers~\eqref{eq:local-control} from a node $\sigma_0$, whose PI set $\OCal_{\sigma_0}$ contains the inital state $x(0) \in \OCal_{\sigma_0}$ to a node $\sigma_f$, whose PI set $\OCal_{\sigma_f}$ contains an equilibrium state $\bar{x}_f$ corresponding to the target output $y_f$.  At each time instance $t \in \N$, the path planner uses the control input $u(t) = F_{\sigma(t)}(x(t) - \bar{x}_{\sigma(t)}) + \bar{u}_{\sigma(t)}$ where $\sigma(t)$ is the current node.  The controller node $\sigma(t)$ is updated $\sigma(t) = \sigma_{i+1}$ when the state $x(t)$ reaches the PI set $\OCal_{\sigma_{i+1}}$ for the next local controller $\sigma_{i+1}$. The initial node is $\sigma(t) = \sigma_0$.  

\begin{algorithm}
\caption{Path-Planning Algorithm}
\label{alg:path-planning}
\begin{algorithmic}[1]
\STATE initial local controller $\sigma(t) = \sigma_0$ 
\FOR{each time $t \in \N$}
\IF{$x(t) \in \OCal_{\sigma_{i+1}}$}
\STATE update local controller~\eqref{eq:local-control} node $\sigma(t) = \sigma_{i+1}$
\ELSE
\STATE use same local controller $\sigma(t) = \sigma_{i}$
\ENDIF
\STATE $u(t) =F_{\sigma(t)}(x(t) - \bar{x}_{\sigma(t)}) + \bar{u}_{\sigma(t)}$
\ENDFOR
\end{algorithmic}
\end{algorithm}

The following theorem shows that the path-planning algorithm satisfies the constraints and drives the system to the target output $y(t) \rightarrow y_f \in \Y$.  

\begin{theorem}
Algorithm \ref{alg:path-planning} solves Problem \ref{prob:path-planning}
\end{theorem}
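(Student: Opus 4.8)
The plan is to argue by induction along the controller path $\sigma_0,\dots,\sigma_N=\sigma_f$ guaranteed by Assumption~A3, establishing two things simultaneously: (i) the input and output constraints are satisfied at every time step, and (ii) each controller switch occurs after finitely many steps, so that the algorithm eventually settles on the target controller $\sigma_f$ and drives the output to $y_f$. Concatenating the finitely many phases then yields a feasible trajectory solving Problem~\ref{prob:path-planning}.

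First I would record the closed-loop dynamics under a fixed local controller $i\in\I$. Substituting \eqref{eq:local-control} into \eqref{eq:system} and using the equilibrium identity \eqref{eq:equilibrium}, the error $e(t)=x(t)-\bar x_i$ satisfies $e(t+1)=(A+BF_i)e(t)$, which is asymptotically stable because $A+BF_i$ is Schur; hence $x(t)\to\bar x_i$, and the sublevel set $\OCal_i$ in \eqref{eq:pi-sets} is positive invariant since $V(x)=e^{T}P_ie$ is a Lyapunov function. The base case of the induction is that $x(0)=x_0\in\OCal_{\sigma_0}$, which holds by the choice of $\sigma_0$ (Assumptions~A2 and~A3).

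For the inductive step, suppose that at some time $t_i$ the state enters $\OCal_{\sigma_i}$ and the algorithm switches to controller $\sigma_i$ (for $i=0$ we take $t_0=0$). While $\sigma(t)=\sigma_i$, positive invariance of $\OCal_{\sigma_i}$ keeps $x(t)\in\OCal_{\sigma_i}$, so by the scaling assumption on $P_{\sigma_i}$ the input $u(t)\in\U$ and output $y(t)=Cx(t)\in\Y$ are feasible throughout this phase. Because $(\sigma_i,\sigma_{i+1})\in\E$, the equilibrium $\bar x_{\sigma_i}$ lies in $\intnew(\OCal_{\sigma_{i+1}})$, so there is a ball $\B(\bar x_{\sigma_i},r)\subseteq\OCal_{\sigma_{i+1}}$ with $r>0$; since $x(t)\to\bar x_{\sigma_i}$, there is a finite $t_{i+1}>t_i$ with $x(t_{i+1})\in\B(\bar x_{\sigma_i},r)\subseteq\OCal_{\sigma_{i+1}}$, at which point Algorithm~\ref{alg:path-planning} advances to $\sigma_{i+1}$. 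This closes the induction: after at most $N$ switches, occurring at finite times $t_0<t_1<\dots<t_N$, the algorithm uses controller $\sigma_N=\sigma_f$ for all $t\ge t_N$, so $x(t)\to\bar x_f$ and $y(t)=Cx(t)\to C\bar x_f=y_f$, while every phase, including the final one, maintains $u(t)\in\U$ and $y(t)\in\Y$.

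The main obstacle, and the step I would treat most carefully, is the finite-time transition argument: one must combine the asymptotic convergence $x(t)\to\bar x_{\sigma_i}$ with the fact that $\bar x_{\sigma_i}$ lies in the open interior, not merely on the boundary, of $\OCal_{\sigma_{i+1}}$ — this is precisely what the strict containment $\bar x_{\sigma_i}\in\intnew(\OCal_{\sigma_{i+1}})$ in the definition of $\E$ provides — to conclude that the trajectory actually reaches $\OCal_{\sigma_{i+1}}$ in finitely many steps rather than only approaching it asymptotically. A secondary point worth stating explicitly is that every switch is ``safe'': it is triggered only when $x(t)$ already lies in the new PI set, so positive invariance of that set immediately guarantees feasibility from the switching instant onward; without the interior condition a switch could land on the boundary and the state could leave on the next step.
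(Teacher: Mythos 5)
Your proposal is correct and follows essentially the same route as the paper's proof: feasibility comes from the safety and positive invariance of the sets $\OCal_{\sigma_i}$, and convergence comes from traversing the PI sets along the graph path, with the crux being the finite-time transition from $\OCal_{\sigma_i}$ to $\OCal_{\sigma_{i+1}}$ guaranteed by $\bar{x}_{\sigma_i}\in\intnew(\OCal_{\sigma_{i+1}})$. The only difference is that the paper makes this transition quantitative---choosing $\alpha$ with $\alpha\OCal_{\sigma_i}\subseteq\OCal_{\sigma_{i+1}}$ and a decay rate $\lambda$ of the Lyapunov function to get an explicit step count $k_i$ with $\lambda^{k_i}\leq\alpha$---whereas you obtain it qualitatively from $x(t)\rightarrow\bar{x}_{\sigma_i}$ together with a ball $\B(\bar{x}_{\sigma_i},r)\subseteq\OCal_{\sigma_{i+1}}$; both arguments suffice for the statement.
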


\begin{proof}
First we note that the input trajectory $u(t) \in \U$ and resulting output trajectory $y(t) \in \Y$ are feasible for all $t \in \N$.  This follows from the fact that Algorithm \ref{alg:path-planning} transitions between PI sets $\OCal_i$ of the local controllers and the fact that the PI sets $\OCal_i$ are safe i.e. every state $x \in \OCal_i$ in the PI $\OCal_i$ set satisfies output $Cx \in \Y$ and input $F(x-\bar{x}_i) + \bar{u}_i \in \U$ constraints.  

Next we show that the output converges $y(t) \rightarrow y_f$ to the target output $y_f \in \Y$.  By assumption, the controller graph $\G$ contains a controller that asymptotically stabilizes an equilibrium $\bar{x}_f$ corresponding to the target output $y_f \in \Y$.  Therefore if the state $x(t)$ reaches the PI set $\OCal_f$ associated with this controller, then the output $y(t)$ will asymptotically converge to the target output $y(t) \rightarrow y_f$.  We now show that the state $x(t)$ reaches the set $\OCal_f$ after a finite number of time-steps.  

By assumption, the controller graph has a path $\sigma_0,\dots,\sigma_f \in \I$ from a node $\sigma_0 \in \I$ whose PI set $\OCal_{\sigma_0}$ contains the initial state $x(0) \in \OCal_{\sigma_0}$ to a node $\sigma_f \in \I$ whose PI set $\OCal_{\sigma_f}$ contains an equilibrium state $\bar{x}_{\sigma_f}$ corresponding to the target output $y_f = C\bar{x}_{\sigma_f}$.  Thus we have a sequence of equilibrium states $\bar{x}_{\sigma_0},\dots,\bar{x}_{\sigma_f}$ that the local controllers~\eqref{eq:local-control} stabilize.  We now show that the state $x(t)$ under the local controller $\sigma_{i} \in \I$ will reach the PI set $\OCal_{\sigma_{i+1}}$ of controller $\sigma_{i+1} \in \I$ in a finite number of time-steps.  

By definition of the edges $(\sigma_i,\sigma_{i+1}) \in \E$ of the controller graph $\G$, the equilibrium state $\bar{x}_{\sigma_i}$ is contained in the interior of the PI set $\OCal_{\sigma_{i+1}}$.  Thus there exists $\alpha \leq 1$ such that $\alpha \OCal_{\sigma_{i}} \subseteq \OCal_{\sigma_{i+1}}$ i.e. 
\begin{align*}
& \alpha (x(t)-\bar{x}_{\sigma_i})^T P_{\sigma_i} (x(t)-\bar{x}_{\sigma_i}) \\
& \qquad \leq (x(t)-\bar{x}_{\sigma_{i+1}})^T P_{\sigma_{i+1}} (x(t)-\bar{x}_{\sigma_{i+1}}).
\end{align*}  
Since the controller~\eqref{eq:local-control} asymptotically stabilizes $\bar{x}_{\sigma_i}$, there exists $\lambda$ such that 
\begin{align*}
(A+BF_{\sigma_i})^T P_i (A+BF_{\sigma_i}) \preceq \lambda P_i
\end{align*}
where $| \lambda | < 1$.  Since there exists $k_i \in \N$ such that $\lambda^{k_i} \leq \alpha$ we have
\begin{align*}
(x(t+k_i)-&\bar{x}_{\sigma_i})^T P_{\sigma_i} (x(t+k_i)-\bar{x}_{\sigma_i}) \\
& = (x(t)-\bar{x}_{\sigma_i})^T A_{cl}^{k_iT} P_i A_{cl}^{k_i} (x(t)-\bar{x}_{\sigma_i}) \\
& \leq \lambda^{k_i} (x(t)-\bar{x}_{\sigma_i})^T P_i (x(t)-\bar{x}_{\sigma_i}) \\
& \leq \alpha (x(t)-\bar{x}_{\sigma_i})^T P_i (x(t)-\bar{x}_{\sigma_i}) \\
& \leq  (x(t)-\bar{x}_{\sigma_{i+1}})^T P_{\sigma_{i+1}} (x(t)-\bar{x}_{\sigma_{i+1}})
\end{align*}
where $A_{cl} = A + BF_{\sigma_i}$ and 
\begin{align*}
x(t+1) 
&= A x(t) + B F_{\sigma_i} (x(t)-\bar{x}_{\sigma_i} )+ B \bar{u}_{\sigma_i} \\
&= (A + BF_{\sigma_i})(x(t)-\bar{x}_{\sigma_i} ) + \bar{x}_{\sigma_i}
\end{align*}
since $\bar{x}_{\sigma_i}$ and $\bar{u}_{\sigma_i}$ are an equilibrium state-input pair~\eqref{eq:equilibrium}.  Thus $x(t+k_i) \in \OCal_{\sigma_i+1}$ if $x(t) \in \OCal_{\sigma_i}$.  Since by assumption the initial state $x(0) \in \OCal_{\sigma_0}$ is inside the PI set $\OCal_{\sigma_0}$, we conclude by induction on $i$ that the state $x(t)$ reaches the set $\OCal_{\sigma_f}$ in a finite number of time-steps $k = k_0 + \cdots + k_f$ and thus the output $y(t)$ converges to the target output $y(t) \rightarrow y_f$.  
\end{proof}

Algorithm \ref{alg:path-planning} is a general path-planning algorithm.  There are three questions that must be addressed to implement this algorithm:
\begin{enumerate}[\quad 1.]
\item How do we design the local controllers~\eqref{eq:local-control} such that their PI sets~\eqref{eq:pi-sets} satisfy the input and output constraints?
\item How do we sample the output set $\Y$ such that the controller graph $G = (\I,\E)$ contains a path from a node $i \in \I$ whose PI $\OCal_i$ contains the initial state $x_0 \in \OCal_i$ to a node $f \in \I$ whose PI set $\OCal_f$ contains an equilibrium state $x_f \in \OCal_f$ corresponding to the target output $y_f \in \Y$.  
\item How do we weight the graph $G = (\I,\E)$ to provide good performance?
\end{enumerate}

This paper focuses on answering the first question.  

\section{Control Design}
\label{sec:control-design}

In this section we present two methods for designing the local controllers~\eqref{eq:local-control} with PI sets~\eqref{eq:pi-sets} that satisfy the input $F_i (\OCal_i-\bar{x}_i)+\bar{u}_i \subseteq \U$ and output constraints $C \OCal_i \subseteq \Y$.

\subsection{Fixed-Gain Controller with Scaled Invariant Set}
\label{sec:control-design-fixed}

Our first method uses a single feedback gain matrix $F_i = F$ for each $i \in \I$ of the local controller~\eqref{eq:local-control}.  Each of the local controllers~\eqref{eq:local-control} has a quadratic Lyapunov function of the form $V(x) = (x-\bar{x}_i)^T P (x-\bar{x}_i)$ that share a common Lyapunov matrix $P_i = P \succ 0$.  The PI sets of the local controller $i \in \I$ is the $\rho_i^2$ level-set of the local Lyapunov function 
\begin{align}
\label{eq:scaled-pi-sets}
\OCal_i(\rho_i) = \big\{ x \in \Re^{n_x} : (x-\bar{x}_i)^T P (x-\bar{x}_i) \leq \rho_i^2 \big\}.
\end{align}
To maximize the number of edges $(i,j) \in \E$ between the local controllers, we would like to maximize the volume of the PI set~\eqref{eq:scaled-pi-sets} by choosing the maximum level $\rho_i^2$ for which we can still satisfy the input $F (\OCal_i(\rho_i)-\bar{x}_i) + \bar{u}_i \subseteq \U$ and output $C \OCal_i(\rho_i) \subseteq \Y$ constraints.  Since the output set $\Y$ is generally non-convex, this is a non-convex problem.  However the output $y_i \in \Y = \bigcup \Y_k$ must be contained in at least one component set $\Y_k \subseteq \Y$ of $\Y$. Then we can solve the relaxed convex problem
\begin{subequations}
\label{eq:rho-max}
\begin{align}
\maximize 	~&~ 	\rho_i \\
\subjto 		~&~	F (\OCal_i(\rho_i)-\bar{x}_i)+\bar{u}_i \subseteq \U \\
			~&~	C \OCal_i(\rho_i) \subseteq \Y_k		
\end{align}
\end{subequations}
which finds the maximum level $\rho_i$ such that $C \OCal_i(\rho_i) \subseteq \Y_k$ is contained in the convex subset $\Y_k \subseteq \Y$ of $\Y$.  

If the system~\eqref{eq:system} has multiple equilibria for the output sample $\bar{y}_i \in \Y$ then the decision variables of problem~\eqref{eq:rho-max} are the level $\rho_i \in \Re$, the equilibrium state $\bar{x}_i \in \Re^{n_x}$, and input $\bar{u}_i \in \U$.  In this case, problem~\eqref{eq:rho-max} can be recast as a linear program.

\begin{proposition}
Problem~\eqref{eq:rho-max} is equivalent to the linear program
\begin{subequations}
\label{eq:rho-lp}
\begin{align}
\max_{\rho_i, \bar{x}_i, \bar{u}_i} ~&~ \rho_i \\
\mathrm{s.t.}	
~&~ H_{u}^j \bar{u}_i + \rho_i \| H_{u}^j F P^{-1/2} \|_2 \leq K_{u}^j \label{eq:rho-input} \\
~&~ H_{y_k}^j \bar{y}_i + \rho_i \| H_{y_k}^j C P^{-1/2} \|_2 \leq K_{y_k}^j \label{eq:rho-output} \\	
~&~ A \bar{x}_i + B \bar{u}_i = 0,~  C \bar{x}_i  = \bar{y}_i, \bar{u}_i \in \U \label{eq:rho-equilibrium}
\end{align}
\end{subequations}
where $(H_{u}^j,K_{u}^j)$ for $j = 1,\dots,m_u$ are the half-spaces of the input set~\eqref{eq:input-set} and $(H_{y_k}^j,K_{y_k}^j)$ are the half-spaces of the $k$-th output set~\eqref{eq:output-subsets}.
\end{proposition}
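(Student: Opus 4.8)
The plan is to show that the two set-inclusion constraints in problem~\eqref{eq:rho-max} can each be rewritten, half-space by half-space, as the scalar inequalities~\eqref{eq:rho-input} and~\eqref{eq:rho-output}, after which the equivalence of the two optimization problems is immediate because they share the same objective $\rho_i$ and the same (equivalent) feasible set over the decision variables $(\rho_i,\bar{x}_i,\bar{u}_i)$. The key computational fact I would use is the standard support-function identity for an ellipsoid: for a vector $v$, $\max\{ v^T z : z^T z \le 1\} = \|v\|_2$. Writing a point of $\OCal_i(\rho_i)$ as $x = \bar{x}_i + \rho_i P^{-1/2} z$ with $\|z\|_2 \le 1$ puts both images $F(\OCal_i(\rho_i)-\bar{x}_i)+\bar{u}_i$ and $C\OCal_i(\rho_i)$ into a form where this identity applies.

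First, for the input constraint: $F(\OCal_i(\rho_i)-\bar{x}_i)+\bar{u}_i \subseteq \U$ holds iff for every $j=1,\dots,m_u$ and every $z$ with $\|z\|_2\le 1$ we have $H_u^j\big(\bar{u}_i + \rho_i F P^{-1/2} z\big) \le K_u^j$; taking the maximum over such $z$ and applying the support-function identity gives exactly $H_u^j\bar{u}_i + \rho_i\|H_u^j F P^{-1/2}\|_2 \le K_u^j$, which is~\eqref{eq:rho-input} (here $\rho_i \ge 0$ so pulling it out of the norm is valid). Second, for the output constraint, the relaxation already fixes a single component polytope $\Y_k$, so $C\OCal_i(\rho_i)\subseteq\Y_k$ holds iff for each $j$ and each $\|z\|_2\le 1$, $H_{y_k}^j\big(C\bar{x}_i + \rho_i C P^{-1/2} z\big)\le K_{y_k}^j$; using $C\bar{x}_i=\bar{y}_i$ and the same identity yields $H_{y_k}^j\bar{y}_i + \rho_i\|H_{y_k}^j C P^{-1/2}\|_2 \le K_{y_k}^j$, which is~\eqref{eq:rho-output}. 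The equilibrium and input-feasibility requirements~\eqref{eq:equilibrium} and $\bar{u}_i\in\intnew(\U)$ that are implicit in the definition of~\eqref{eq:rho-max} become the linear equality constraints and $\bar{u}_i\in\U$ in~\eqref{eq:rho-equilibrium}. I would note that~\eqref{eq:rho-input}--\eqref{eq:rho-equilibrium} are linear in $(\rho_i,\bar{x}_i,\bar{u}_i)$ because the norm terms $\|H_u^j F P^{-1/2}\|_2$ and $\|H_{y_k}^j C P^{-1/2}\|_2$ are fixed constants (the gain $F$ and Lyapunov matrix $P$ are given), so the reformulated problem is genuinely an LP.

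The main obstacle is not a deep one but a bookkeeping point worth handling carefully: the passage from the ``for all $z$ in the unit ball'' statement to the single inequality via the support function, i.e.\ justifying that $\sup_{\|z\|_2\le1} H_u^j\rho_i F P^{-1/2} z = \rho_i\|H_u^j F P^{-1/2}\|_2$ and that the supremum is attained, so the inclusion is equivalent to (not merely implied by) the scalar inequality. This needs $P\succ0$ so that $P^{-1/2}$ exists and the change of variables $x-\bar{x}_i = \rho_i P^{-1/2} z$ is a bijection between $\OCal_i(\rho_i)$ and the scaled unit ball; that hypothesis is in force by the standing assumption that $P$ is a Lyapunov matrix. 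A secondary point is the strictness convention: problem~\eqref{eq:rho-max} carries $\bar{u}_i\in\intnew(\U)$ from~\eqref{eq:local-control}, whereas~\eqref{eq:rho-equilibrium} writes $\bar{u}_i\in\U$; since the optimizer of the LP will in any case satisfy the constraints with the margin $\rho_i\|H_u^j F P^{-1/2}\|_2$ when $\rho_i>0$, and since one may take the closure without changing the supremum, this discrepancy is harmless and I would remark on it in one line rather than belabor it.
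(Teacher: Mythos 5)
Your proposal is correct and follows essentially the same route as the paper's proof: a change of variables mapping $\OCal_i(\rho_i)$ to a ball, followed by reducing each robust half-space constraint to a single scalar inequality via the support function of the ball (the paper phrases this as evaluating the constraint at the extremal point $e = \rho_i P^{-1/2}F^T(H_u^j)^T/\|H_u^j F P^{-1/2}\|$, which is the same Cauchy--Schwarz fact). Your additional remarks on attainment of the supremum and on the $\intnew(\U)$ versus $\U$ convention are sensible refinements of the same argument, not a different approach.
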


\begin{proof}
Problem~\eqref{eq:rho-max} can be rewritten as 
\begin{align*}
\max_{\rho_i, \bar{x}_i, \bar{u}_i} ~&~ \rho_i \\
\mathrm{s.t.}	~&~	F (x-\bar{x}_i)+\bar{u}_i \in \U  && \forall x \in \OCal(\rho_i) \\
			~&~	C x \in \Y_k && \forall x \in \OCal(\rho_i). 
\end{align*}
Performing the change of variables $e = P^{1/2}(x-\bar{x}_i)$ we obtain 
\begin{align*}
\max_{\rho_i, \bar{x}_i, \bar{u}_i} ~&~ \rho_i \\
\mathrm{s.t.}	~&~	F P^{-1/2} e + \bar{u}_i \in \U  && \forall e \in \mathcal{B}(0,\rho_i) \\
			~&~	C P^{-1/2} e + C \bar{x}_i \in \Y_k && \forall e \in \mathcal{B}(0,\rho_i) 
\end{align*}
where $\mathcal{B}(0,\rho_i) = \{ e : \| e \|_2 \leq \rho_i \}$ is the origin-centered ball of radius $\rho_i$.  Consider the $j$-th constraint of $\U$
\begin{align}
\label{pf:rho-input}
H_u^j F P^{-1/2} e + H_u^j \bar{u}_i \leq K_u^j  && \forall e \in \mathcal{B}(0,\rho_i).
\end{align}
Constraint~\eqref{pf:rho-input} is satisfied for all $e$ such that $\| e \| \leq \rho_i$ if and only if it is satisfied for (see \cite{Borrelli2009} Section~5.4.5)
\begin{align*}
e = \rho_i\tfrac{P^{-1/2}F^T (H_u^j)^T}{\|H_u^j F P^{-1/2}\|}.
\end{align*}
Therefore constraint~\eqref{pf:rho-input} can be replaced by~\eqref{eq:rho-input} for each $j = 1,\dots,m_u$.  Likewise we can derive the constraints~\eqref{eq:rho-output} for each $j = 1,\dots,m_{y_k}$.  The constraint~\eqref{eq:rho-equilibrium} is the condition for $\bar{x}_i$ and $\bar{u}_i \in \U$ to be an equilibrium state and input pair.  
\end{proof}

If the system~\eqref{eq:system} has a unique equilibrium state $\bar{x}_i$ and input $\bar{u}_i$ for the output sample $\bar{y}_i \in \Y$ then problem~\eqref{eq:rho-max} has a closed-form solution as shown in Corollary \ref{cor:rho-lp} below.  

\begin{corollary}
\label{cor:rho-lp}
Let $\bar{x}_i$ and $\bar{u}_i$ be the unique equilibrium state and input~\eqref{eq:equilibrium} respectively for the output $\bar{y}_i \in \Y_k \subseteq \Y$.  Then the optimal solution to problem~\eqref{eq:rho-max} is given by
\begin{align}
\label{eq:rho-closed-form}
\rho_i^\star = \min\left\{ 
\frac{ K_{u}^j - H_{u}^j  \bar{u}_i}{\big\| H_{u}^j F_i P^{-1/2} \big\|}, 
\frac{ K_{y_k}^j - H_{y_k}^j \bar{y}_i }{\big\|  H_{y_k}^j C P^{-1/2} \big\|}
\right\}
\end{align}
where $(H_{u}^j,K_{u}^j)$ for $j = 1,\dots,m_u$ are the half-spaces of the input set~\eqref{eq:input-set} and $(H_{y_k}^j,K_{y_k}^j)$ are the half-spaces of the $k$-th output set~\eqref{eq:output-subsets}.
\end{corollary}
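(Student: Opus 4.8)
The plan is to specialize the linear program~\eqref{eq:rho-lp} from the preceding Proposition to the case of a unique equilibrium and observe that it collapses to a one-dimensional linear program whose optimum can be read off directly. First I would note that when the equilibrium state $\bar{x}_i$ and input $\bar{u}_i$ associated with $\bar{y}_i$ are unique, the equality constraints~\eqref{eq:rho-equilibrium} admit a single feasible point, so $\bar{x}_i$, $\bar{u}_i$, and hence $\bar{y}_i = C\bar{x}_i$, cease to be free decision variables; the only remaining variable is the scalar $\rho_i$. By the Proposition, problem~\eqref{eq:rho-max} is then equivalent to
\begin{align*}
\max_{\rho_i} ~&~ \rho_i \\
\mathrm{s.t.}~&~ \rho_i \big\| H_{u}^j F P^{-1/2} \big\|_2 \leq K_{u}^j - H_{u}^j \bar{u}_i, \quad j = 1,\dots,m_u, \\
~&~ \rho_i \big\| H_{y_k}^j C P^{-1/2} \big\|_2 \leq K_{y_k}^j - H_{y_k}^j \bar{y}_i, \quad j = 1,\dots,m_{y_k}.
\end{align*}

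Second, I would observe that every constraint is simply an upper bound on $\rho_i$. Since $\bar{u}_i \in \intnew(\U)$ and $\bar{y}_i$ lies in $\Y_k$ (indeed in its interior for the samples used by the planner), each right-hand side $K_{u}^j - H_{u}^j \bar{u}_i$ and $K_{y_k}^j - H_{y_k}^j \bar{y}_i$ is nonnegative, and wherever the denominator is nonzero the constraint reads $\rho_i \leq (\text{ratio})$. Maximizing the linear objective $\rho_i$ over the intersection of these upper bounds (together with the implicit $\rho_i \geq 0$) is achieved at the tightest bound, i.e. at $\rho_i = \rho_i^\star$ with $\rho_i^\star$ the minimum in~\eqref{eq:rho-closed-form}. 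Finally I would note that $\rho_i^\star > 0$ because $\bar{u}_i$ is strictly interior to $\U$ and $\bar{y}_i$ strictly interior to $\Y_k$, so that $\OCal_i(\rho_i^\star)$ is full-dimensional, consistent with the standing assumptions of Section~\ref{sec:path-planning-algorithm}.

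There is essentially no deep step here; the argument is just the observation that the LP of the Proposition degenerates to a scalar problem. The only point requiring care is the degenerate case $\| H_{u}^j F P^{-1/2} \|_2 = 0$ or $\| H_{y_k}^j C P^{-1/2} \|_2 = 0$, in which the corresponding constraint is vacuously satisfied for all $\rho_i$; such terms must be interpreted as $+\infty$ (or excluded) in the minimum~\eqref{eq:rho-closed-form}, and I would state this convention explicitly so that the closed-form expression is unambiguous.
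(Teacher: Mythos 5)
Your argument is correct and is exactly the route the paper takes: its proof of the corollary is the single line ``follows directly from~\eqref{eq:rho-input} and~\eqref{eq:rho-output},'' i.e.\ with the equilibrium pinned down the LP~\eqref{eq:rho-lp} reduces to maximizing the scalar $\rho_i$ subject to the upper bounds whose tightest one is the minimum in~\eqref{eq:rho-closed-form}. Your extra remarks on positivity of $\rho_i^\star$ and on the convention for vanishing denominators are sensible additions but not a different approach.
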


\begin{proof}
Follows directly from~\eqref{eq:rho-input} and~\eqref{eq:rho-output}.  
\end{proof}

Since the Lyapunov matrix $P$ is shared by all the local controllers, the half-space parameters $(H_{u}^j,K_{u}^j)$ and $(H_{y_k}^j,K_{y_k}^j)$ can be normalized offline such that $\| H_{u}^j F_i P^{-1/2} \| = 1$ for all $j = 1,\dots,m_u$ and $\|  H_{y_k}^j C P^{-1/2} \| = 1$ for all $j = 1,\dots,m_{y_k}$.  Thus evaluating~\eqref{eq:rho-closed-form} has computational complexity $O(n_u m_u + n_y m_{y_k})$ where $n_u$ and $n_y$ are the number of system inputs and outputs respectively, and $m_u$ and $m_{y_k}$ are the respective number of constraints that define the input $\U$ and output $\Y_k$ sets.  Evaluating~\eqref{eq:rho-closed-form} has the same computational complexity as testing the set memberships $\bar{u}_i \in \U$ and $\bar{y}_i \in \Y_k$. In fact the computations used to test set membership can be reused to evaluate~\eqref{eq:rho-closed-form}.  Thus the PI sets~\eqref{eq:pi-sets} for the local controllers~\eqref{eq:local-control} and hence the controller graph $\G$ can be constructed efficiently in real-time.

\subsection{Design of Controllers by Semi-Definite Programming}
\label{sec:control-design-sdp}

In this section we present a method for obtaining the local controllers~\eqref{eq:local-control} by solving a semi-definite program.  We assume that the system~\eqref{eq:system} has a unique equilibrium state $\bar{x}_i$ and input $\bar{u}_i$ for each sample output $\bar{y}_i \in \Y$.  

The feedback gain $F_i$ and Lyapunov matrix $P_i$ for the $i$-th controller are obtained by solving the problem
\begin{subequations}
\label{eq:sdp}
\begin{align}
\maximize ~&~ \log \det P_i^{-1}  \label{eq:sdp-cost} 	\\
\subjto 	
~&~ \begin{smatrix} 
P_i^{-1} 					&  \cdot	 \\  
A P_i^{-1} + B F_i P_i^{-1} 	& P_i^{-1} 
\end{smatrix} \succ 0 \label{eq:sdp-invariance} \\
~&~ \begin{smatrix}  P_i^{-1} & \cdot \\ H_{\U}^j F_i P_i^{-1} &  \big(K_{u}^j - H_{u}^j \bar{u}_i\big)^2  \end{smatrix} \succeq 0 \label{eq:sdp-input} \\
~&~ \begin{smatrix} P_i^{-1}  & \cdot \\  H_{y_k}^j C P_i^{-1} & \big(K_{y_k}^j - H_{y_k}^j \bar{y}_i \big)^2 \end{smatrix} \succeq 0. \label{eq:sdp-output} 
\end{align}
\end{subequations}
The constraints of problem~\eqref{eq:sdp} are linear and the cost function~\eqref{eq:sdp-cost} is concave in the decision variables $X = P^{-1}$ and $Y = F P^{-1}$.  Problem~\eqref{eq:sdp} is therefore a semi-definite program, which can be efficiently solved using standard software packages \cite{SeDuMi,SDPT3}.  

Proposition \ref{prop:sdp} shows that problem~\eqref{eq:sdp} finds a controller with the largest constraint satisfying PI set.

\begin{proposition}
\label{prop:sdp}
Problem~\eqref{eq:sdp} finds the local controller~\eqref{eq:local-control} with the largest ellipsoidal PI set~\eqref{eq:pi-sets} that satisfies the input~\eqref{eq:input-set} and output~\eqref{eq:output-set} constraints.  
\end{proposition}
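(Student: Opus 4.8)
The plan is to show that, under the change of variables $X = P_i^{-1}\succ 0$ and $Y = F_i P_i^{-1}$ (from which $F_i = YX^{-1}$ is uniquely recovered), the feasible set of problem~\eqref{eq:sdp} coincides with the set of controllers~\eqref{eq:local-control} whose ellipsoid~\eqref{eq:pi-sets} is PI and respects the input and output constraints, and that the cost~\eqref{eq:sdp-cost} is a strictly increasing function of the volume of that ellipsoid. For the latter I would simply recall that $\mathrm{vol}(\OCal_i) = c_{n_x}\sqrt{\det P_i^{-1}} = c_{n_x}\sqrt{\det X}$, with $c_{n_x}$ the volume of the unit ball in $\Re^{n_x}$, so that maximizing $\log\det X$ maximizes $\mathrm{vol}(\OCal_i)$.

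The substance is translating the three matrix inequalities via the Schur complement. For~\eqref{eq:sdp-invariance}: the level set $\OCal_i$ is PI for $x(t+1) = (A+BF_i)(x(t)-\bar x_i)+\bar x_i$ exactly when $(A+BF_i)^T P_i (A+BF_i)\preceq P_i$; congruence by $P_i^{-1}$, together with $AP_i^{-1}+BF_iP_i^{-1} = AX+BY$ and $P_i = X^{-1}$, turns this into $X - (AX+BY)^T X^{-1}(AX+BY)\succeq 0$, whose Schur-complement form is~\eqref{eq:sdp-invariance} (the strict version additionally forcing $A+BF_i$ Schur, as assumed for~\eqref{eq:local-control}). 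For~\eqref{eq:sdp-input}: writing $x = \bar x_i + e$, the requirement that $H_u^j(F_i e + \bar u_i)\le K_u^j$ for every $e$ with $e^T P_i e\le 1$ is equivalent to $\max_{e^T P_i e\le1} H_u^j F_i e = \|H_u^j F_i P_i^{-1/2}\|_2 \le K_u^j - H_u^j\bar u_i$; since $\bar u_i\in\intnew(\U)$ makes the right-hand side strictly positive, both sides may be squared to $(H_u^j Y)X^{-1}(H_u^j Y)^T \le (K_u^j - H_u^j\bar u_i)^2$, which is~\eqref{eq:sdp-input} by the Schur complement. The output constraint~\eqref{eq:sdp-output} follows identically, using $H_{y_k}^j C\bar x_i = H_{y_k}^j\bar y_i$ and $\max_{e^T P_i e\le1} H_{y_k}^j C e = \|H_{y_k}^j C P_i^{-1/2}\|_2$.

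Assembling these equivalences, the constraints~\eqref{eq:sdp-invariance}--\eqref{eq:sdp-output} hold iff $\OCal_i$ is PI, $F_i(\OCal_i-\bar x_i)+\bar u_i\subseteq\U$, and $C\OCal_i\subseteq\Y_k$; the feasible set in $(X,Y)$ is convex (an intersection of LMIs) and $\log\det X$ is concave, so~\eqref{eq:sdp} is a genuine semi-definite program whose optimizer is the global volume-maximal such $\OCal_i$. I would close with the caveat, parallel to the remark after~\eqref{eq:rho-max}, that since $C\OCal_i$ is convex while $\Y$ is a union of polytopes, imposing $C\OCal_i\subseteq\Y_k$ for one component is a conservative sufficient condition for $C\OCal_i\subseteq\Y$, so in practice one solves~\eqref{eq:sdp} over the finitely many $k$ with $\bar y_i\in\Y_k$ and keeps the best. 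The only real obstacle is the ellipsoid-in-halfspace step: computing the support function, justifying the squaring as an equivalence (which needs $K_u^j - H_u^j\bar u_i\ge 0$ and $K_{y_k}^j - H_{y_k}^j\bar y_i\ge 0$, guaranteed by $\bar u_i\in\intnew(\U)$, $\bar y_i\in\intnew(\Y)$), and tracking the substitution $H F_i P_i^{-1} F_i^T H^T = (HY)X^{-1}(HY)^T$ so that the Schur complements match the displayed blocks; the remainder is bookkeeping.
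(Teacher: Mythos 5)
Your proposal is correct and follows essentially the same route as the paper's proof: Schur complements convert \eqref{eq:sdp-invariance} into the Lyapunov/invariance inequality $(A+BF_i)^T P_i (A+BF_i) \prec P_i$ and convert \eqref{eq:sdp-input}--\eqref{eq:sdp-output} into the ellipsoid-in-halfspace conditions, with $\log\det P_i^{-1}$ maximizing the ellipsoid volume. Your write-up is somewhat more careful than the paper's in making the equivalence bidirectional (via the support-function computation and the $(X,Y)=(P_i^{-1},F_iP_i^{-1})$ change of variables) and in flagging the conservativeness of restricting to a single component $\Y_k$, but these are refinements of the same argument rather than a different approach.
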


\begin{proof}
First we show that the local controller~\eqref{eq:local-control} where $F_i$ is the solution to~\eqref{eq:sdp} stabilizes the equilibrium $\bar{x}_i$.  The dynamics of the deviation $\tilde{x}_i(t) = x(t) - \bar{x}_i$ of the state $x(t)$ from the equilibrium $\bar{x}_i$ are given by
\begin{align*}
\tilde{x}_i(t+1) 
&= A x(t) + B F_i (x(t) - \bar{x}_i) + B \bar{u}_i - \bar{x}_i \\
&= (A+BF_i) \tilde{x}_i(t)
\end{align*}
where $B \bar{u}_i - \bar{x}_i = A \bar{x}_i$ since $\bar{u}_i$ is the equilibrium input corresponding to the equilibrium state $\bar{x}_i$.  Taking the Schur complement of the constraint~\eqref{eq:sdp-invariance} implies that the matrix $A+BF_i$ satisfies 
\begin{align*}
(A+BF_i)^T P_i (A+BF_i) - P_i \prec 0.
\end{align*}
Thus $A+BF_i$ is Schur and the unit level-set $\OCal_i$ in~\eqref{eq:pi-sets} is positive invariant.  

Next we show that the PI set $\OCal_i$ satisfies output and input constraints.  For a positive definite Lyapunov matrix $P_i \succ 0$, the constraint~\eqref{eq:sdp-output} is equivalent to the matrix inequality 
\begin{align*}
\begin{smatrix} P_i & C^T H_{y_k}^j \\ H_{y_k}^j C & \big( K_{y_k}^j - H_{y_k}^j \bar{y}_i \big)^2 \end{smatrix} \succeq 0.
\end{align*}
Taking the Schur complement produces 
\begin{align*}
P_i \succeq \frac{1}{ \big( K_{y_k}^j - H_{y_k}^j \bar{y}_i \big)^2} C^T H_{y_k}^j (H_{y_k}^j)^T C
\end{align*}
where $K_{y_k}^j - H_{y_k}^j \bar{y}_i \geq 0$ since $\bar{y}_i \in \Y_k$.  Hence
\begin{align*}
H_{y_k}^j(C\tilde{x}_i+\bar{y}_i) = H_{y_k}^jC x \leq K_{y_k}^j
\end{align*} 
whenever $\tilde{x}^T P_i \tilde{x} = (x - \bar{x}_i)^T P_i (x - \bar{x}_i) \leq 1$ where $\bar{y}_i = C \bar{x}_i$ and $\tilde{x}_i = x - \bar{x}_i$.  Therefore $C \OCal_i \subseteq \Y_k$ since $Cx \in \Y_k$ whenever $x \in \OCal_i$.  Likewise we can show $F_i(\OCal_i - \bar{x}_i) + \bar{u}_i \subseteq \U$. 

Finally we note that the volume of the PI set $\OCal_i$ is proportional to $\det P_i^{-1}$.  Thus we conclude that problem \eqref{eq:sdp} finds the local controller \eqref{eq:local-control} with the largest ellipsoidal PI set \eqref{eq:pi-sets} that satisfies the input \eqref{eq:input-set} and output \eqref{eq:output-subsets} constraints.  
\end{proof}

Designing both the local controllers~\eqref{eq:local-control} and the PI sets~\eqref{eq:pi-sets} for each output $y_i \in \Y$ using the semi-definite program~\eqref{eq:sdp} can be time consuming.  The computational burden can be eased by fixing the feedback gain matrix $F_i$ of the local controller~\eqref{eq:local-control} and only designing the PI sets.  In this case, the simplified SDP is given by 
\begin{subequations}
\label{eq:sdpr}
\begin{align}
\max 	~&~ \log \det P_i^{-1} \\
\mathrm{s.t.} 		~&~ P_i^{-1} - (A+BF_i) P_i^{-1} (A+BF_i)^T  \succeq 0 \label{eq:sdpr-invariance} \\
				~&~ H_u^J F_i P_i^{-1} F_i^T (H_u^j)^T \leq (K_u^j - H_u^j \bar{u}_i )^2 \label{eq:sdpr-input} \\
				~&~ H_{y_k}^J C P_i^{-1} C^T (H_{y_k}^j)^T \leq (K_{y_k}^j - H_{y_k}^j \bar{y}_i )^2 \label{eq:sdpr-output}.
\end{align}
\end{subequations}
This problem is a semi-definite program in the decision variable $P_i^{-1}$.  The following corollary shows that~\eqref{eq:sdpr} is equivalent to~\eqref{eq:sdp} with a fixed feedback gain $F_i$.  

\begin{corollary}
Let $F_i$ be fixed.  Then the semi-definite program~\eqref{eq:sdp} is equivalent to~\eqref{eq:sdpr}.  
\end{corollary}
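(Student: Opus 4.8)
The plan is to fix the feedback gain $F_i$ (so that the only decision variable of \eqref{eq:sdp} is $X_i = P_i^{-1}$, with $Y_i = F_iP_i^{-1}$ determined by it), note that the objective \eqref{eq:sdp-cost} then coincides exactly with the objective of \eqref{eq:sdpr}, and show that each of the three matrix-inequality constraints \eqref{eq:sdp-invariance}--\eqref{eq:sdp-output} is equivalent, via a Schur complement taken with respect to its upper-left block $P_i^{-1}$, to the corresponding constraint \eqref{eq:sdpr-invariance}--\eqref{eq:sdpr-output}. Once this is established the two programs have identical feasible sets and identical objectives, hence are equivalent. Throughout I would use that both problems implicitly require $P_i^{-1}\succ 0$ (otherwise $\log\det P_i^{-1}=-\infty$ and $P_i$ is not a valid Lyapunov matrix), so the pivot block is strictly positive definite and the Schur complement characterization is a genuine ``if and only if.''

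For the invariance constraint \eqref{eq:sdp-invariance} I would rewrite the off-diagonal block as $AP_i^{-1}+BF_iP_i^{-1}=(A+BF_i)P_i^{-1}$, so that the Schur complement with respect to the $(1,1)$ block $P_i^{-1}$ is $P_i^{-1}-(A+BF_i)P_i^{-1}\,P_i\,P_i^{-1}(A+BF_i)^T = P_i^{-1}-(A+BF_i)P_i^{-1}(A+BF_i)^T$, which is \eqref{eq:sdpr-invariance}. For the input constraint \eqref{eq:sdp-input}, whose off-diagonal block is $H_u^jF_iP_i^{-1}$ and whose $(2,2)$ block is the scalar $(K_u^j-H_u^j\bar u_i)^2$, the same pivot on $P_i^{-1}$ yields the scalar inequality $H_u^jF_iP_i^{-1}F_i^T(H_u^j)^T\le (K_u^j-H_u^j\bar u_i)^2$, i.e.\ \eqref{eq:sdpr-input}. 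The identical computation applied to \eqref{eq:sdp-output}, with $C$ replacing $F_i$ and $(H_{y_k}^j,K_{y_k}^j,\bar y_i)$ replacing $(H_u^j,K_u^j,\bar u_i)$, produces \eqref{eq:sdpr-output}. Reading these Schur complement equivalences in both directions shows the feasible set of \eqref{eq:sdp} with $F_i$ fixed equals that of \eqref{eq:sdpr}, completing the proof.

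Since this reduces to three applications of the Schur complement lemma with the same pivot block, there is no genuine obstacle. The only point needing a word of care is justifying that the pivot block $P_i^{-1}$ is strictly positive definite on the feasible set, so that each transformation is a true equivalence and not merely one implication; this is immediate because a valid Lyapunov matrix satisfies $P_i\succ 0$ and the objective is otherwise unbounded below. (One could alternatively pivot on the scalar $(2,2)$ blocks in \eqref{eq:sdp-input}--\eqref{eq:sdp-output}, but that would require separately invoking $K_u^j-H_u^j\bar u_i>0$, which holds because $\bar u_i\in\intnew(\U)$; pivoting on $P_i^{-1}$ avoids this and keeps the three cases uniform.)
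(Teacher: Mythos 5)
Your proof is correct and takes essentially the same route as the paper's: with $F_i$ fixed the objectives coincide, and each constraint of~\eqref{eq:sdp} is matched to the corresponding constraint of~\eqref{eq:sdpr} by a Schur complement (the paper states the transformations in the reverse direction and more tersely, but the content is identical). Your added remark that the pivot block $P_i^{-1}\succ 0$ is needed for the Schur complement step to be a genuine equivalence is a small but worthwhile refinement the paper leaves implicit.
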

 
\begin{proof}
Taking the Schur complement of the constraint~\eqref{eq:sdpr-invariance} and multiplying by $\mathrm{diag}(I,P_i^{-1})$ produces~\eqref{eq:sdp-invariance}.  The constraints~\eqref{eq:sdpr-input} and~\eqref{eq:sdpr-output} are the Schur complement of~\eqref{eq:sdp-input} and~\eqref{eq:sdp-output} respectively.  
\end{proof}

\section{Example: Spacecraft Maneuver Planning}
\label{sec:example}

In this section we apply our path-planning algorithm to the problem of planning spacecraft docking maneuvers.  The relative dynamics of a pair of spacecraft in the orbital-plane are modeled by the Hill-Clohessy-Wiltshire equations \cite{Wie2010} 
\begin{subequations}
\label{eq:spacecraft-dynamics}
\begin{align}
\ddot{y}_1 &= 2n\dot{y}_2 + 3 n^2 y_1 + u_1 \\
\ddot{y}_2 &= -2n\dot{y}_1 + u_2 
\end{align}
\end{subequations}
where $y_1$ is the difference in radial position of the spacecraft and $y_2$ is the difference in position along the orbital velocity direction.  The state vector $x = [y_1,y_2,\dot{y_1},\dot{y_2}]^T$ contains the relative radial and orbital positions and velocities.  The inputs $u_1$ and $u_2$ are the thrusts normalized by the spacecraft mass along the radial and orbital velocity directions.  The dynamics are linearized about a circular orbit of $415$ kilometers which gives $n = 1.1\times 10^{-3}$ inverse-seconds in~\eqref{eq:spacecraft-dynamics}.  The dynamics~\eqref{eq:spacecraft-dynamics} are discretized with a sample period of $30$ seconds.  The normalized thrusts must satisfy the input constraints 
\begin{align}
\label{eq:spacecraft-input}
-10^{-2} \leq u_1,u_2 \leq 10^{-2}
\end{align}
newtons per kilogram.

We consider the scenario of planning a maneuver around a piece of debris shown in Fig.~\ref{fig:spacecraft}.  The debris is a square with a $100$ meter side length located at $[300,400]^T$ meters.  The output set $\Y$ is the set difference of the bounding box $[-400,1000] \times [-400,1100]$ meters and the debris set.  The component sets $\Y_1,\dots,\Y_4$ that comprise the output set $\Y = \Y_1\cup \cdots \cup \Y_4$ were obtained by flipping each of the $4$ constraints that define the debris set and intersecting with the bounding box.   The component sets $\Y_1,\dots,\Y_4$ are shown in Fig.~\ref{fig:Y}.

\begin{figure}[h]
\centering
\includegraphics[width=0.5\columnwidth]{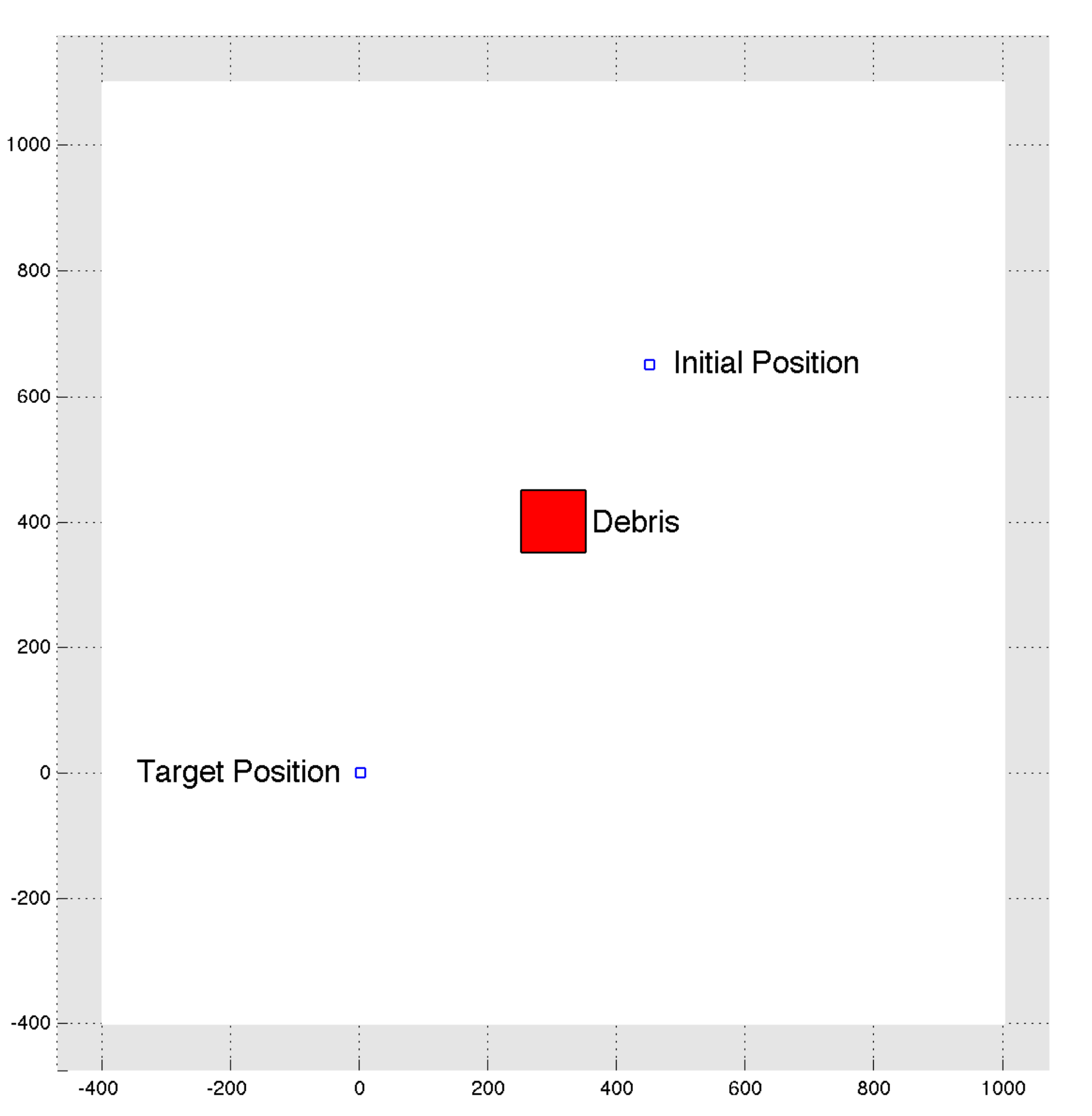}
\caption{Initial position and target position of the spacecraft with debris between.  The white area is the output set $\Y \subset \Re^2$ and the red square is the debris.}
\label{fig:spacecraft}
\end{figure}

\begin{figure}[h]
\centering
\subfigure[$\Y_1$]{
\includegraphics[width=0.2\columnwidth]{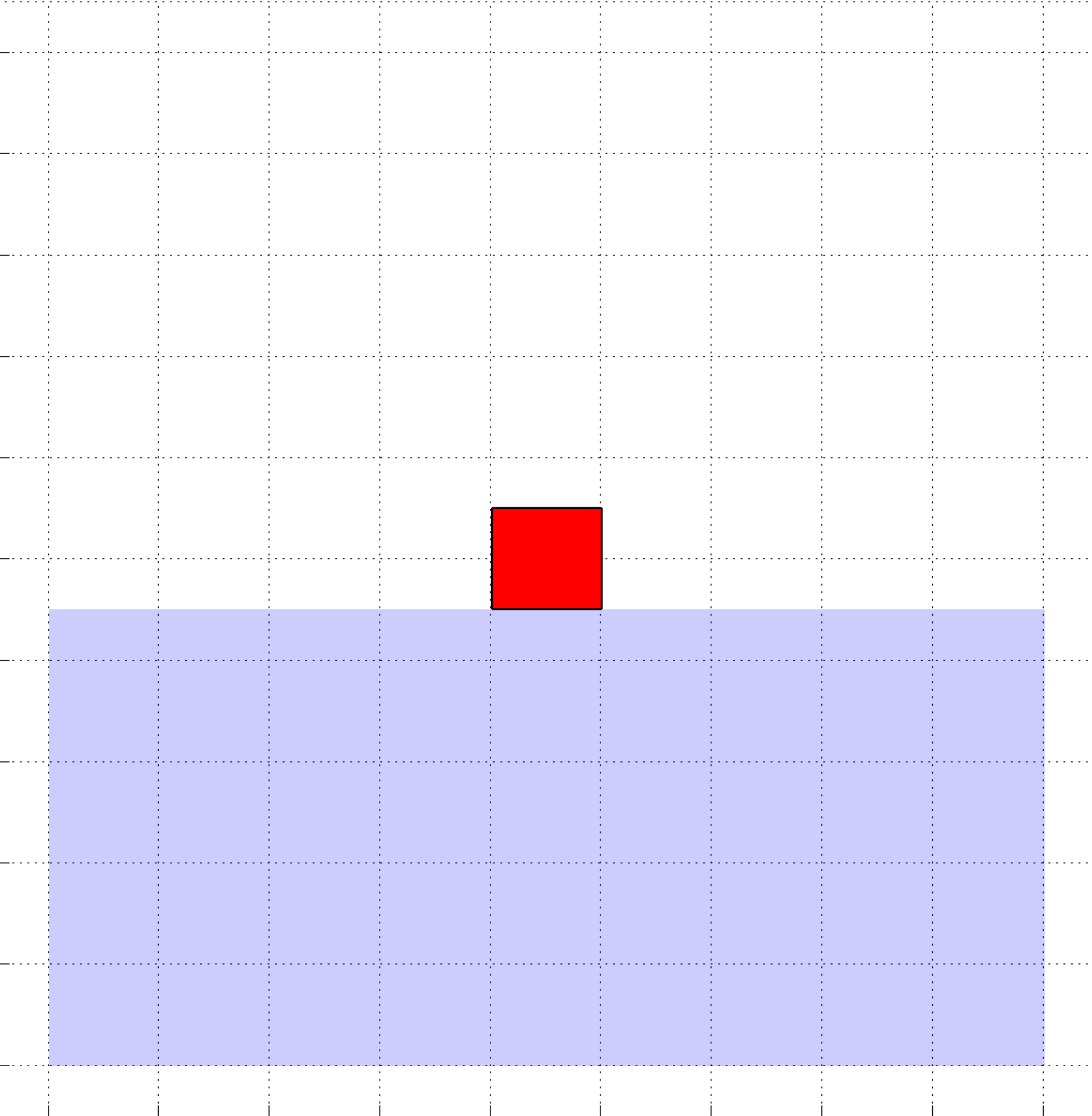}
}
\subfigure[$\Y_2$]{
\includegraphics[width=0.2\columnwidth]{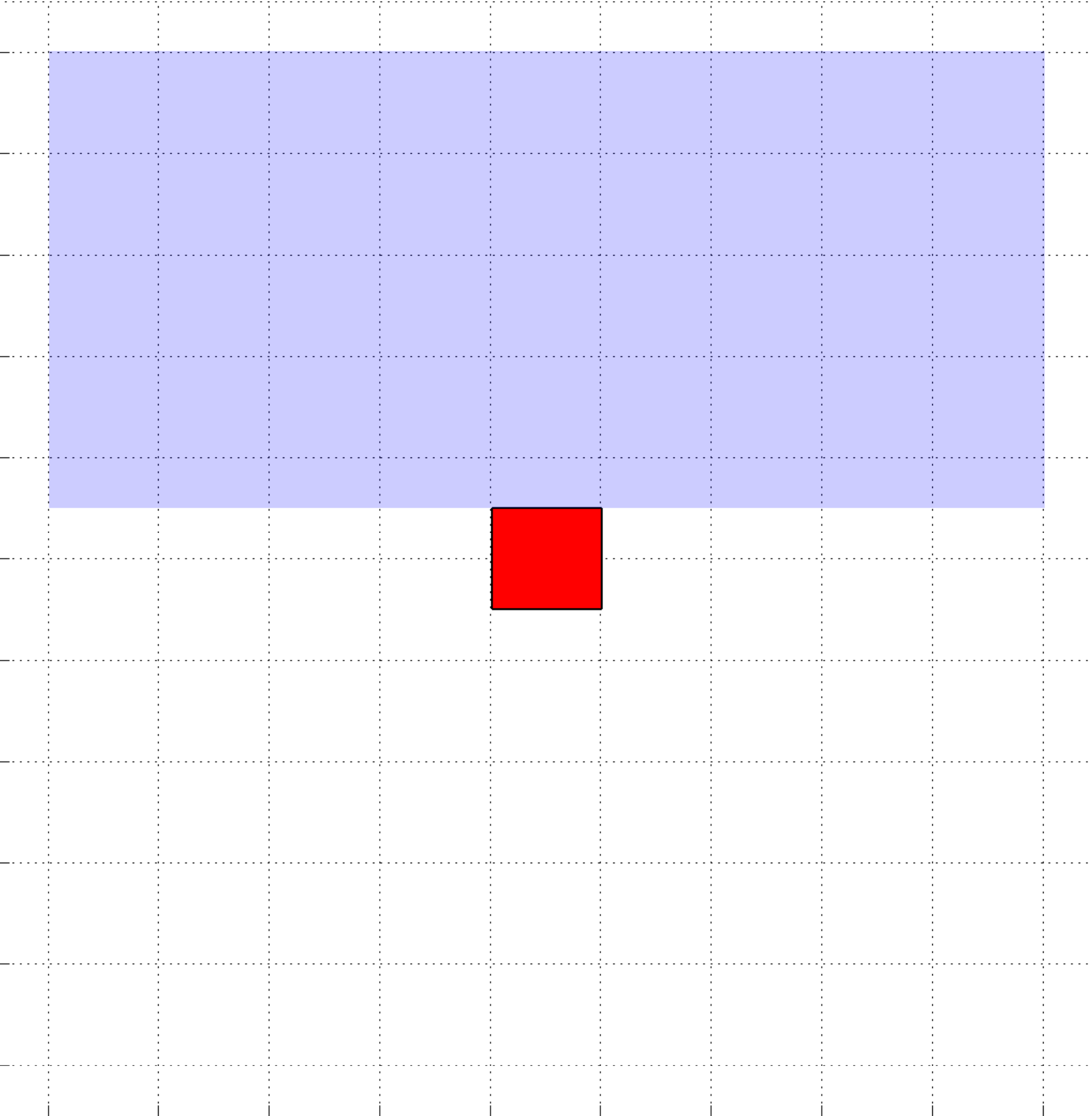}
}
\subfigure[$\Y_3$]{
\includegraphics[width=0.2\columnwidth]{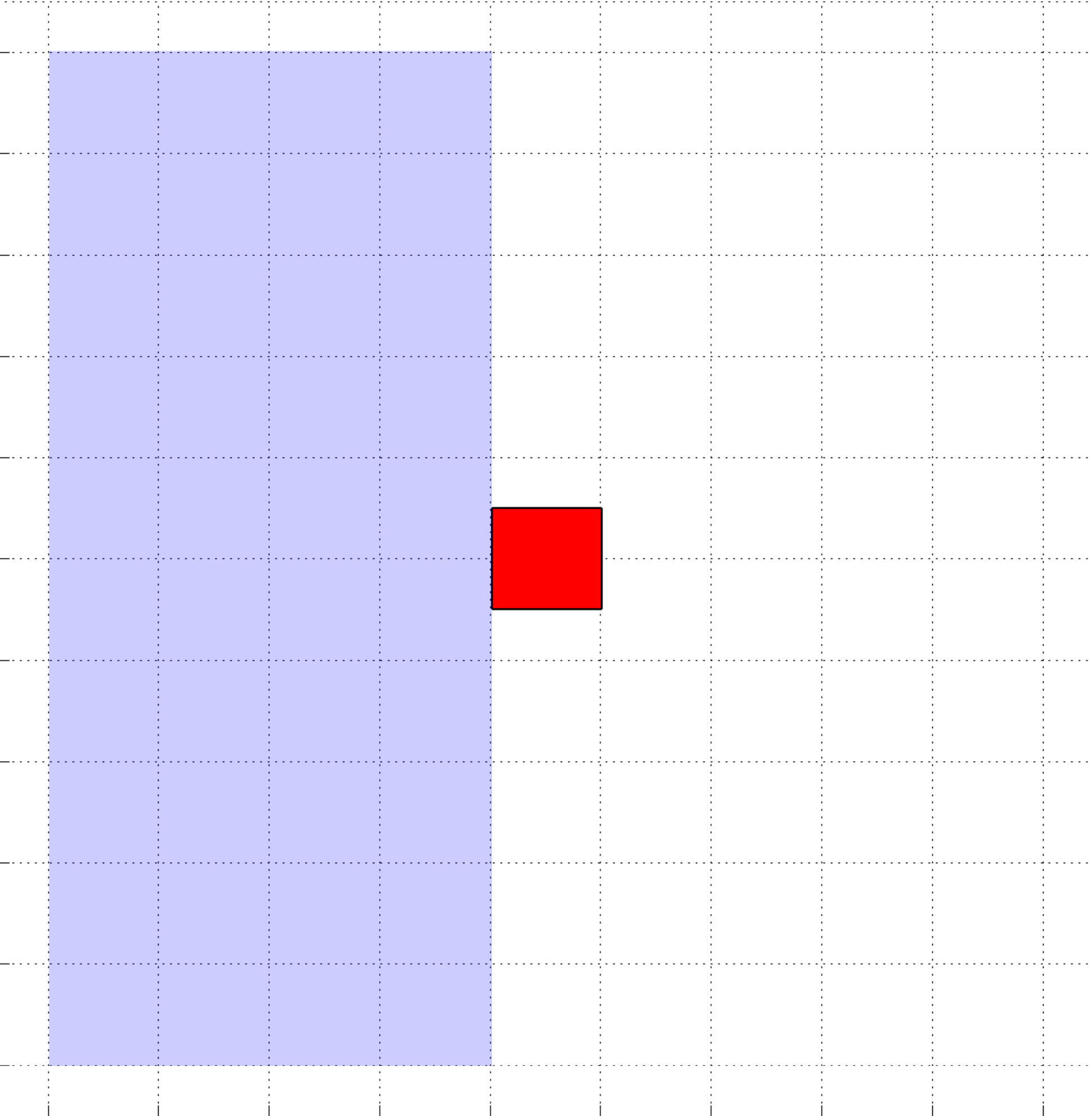}
}
\subfigure[$\Y_4$]{
\includegraphics[width=0.2\columnwidth]{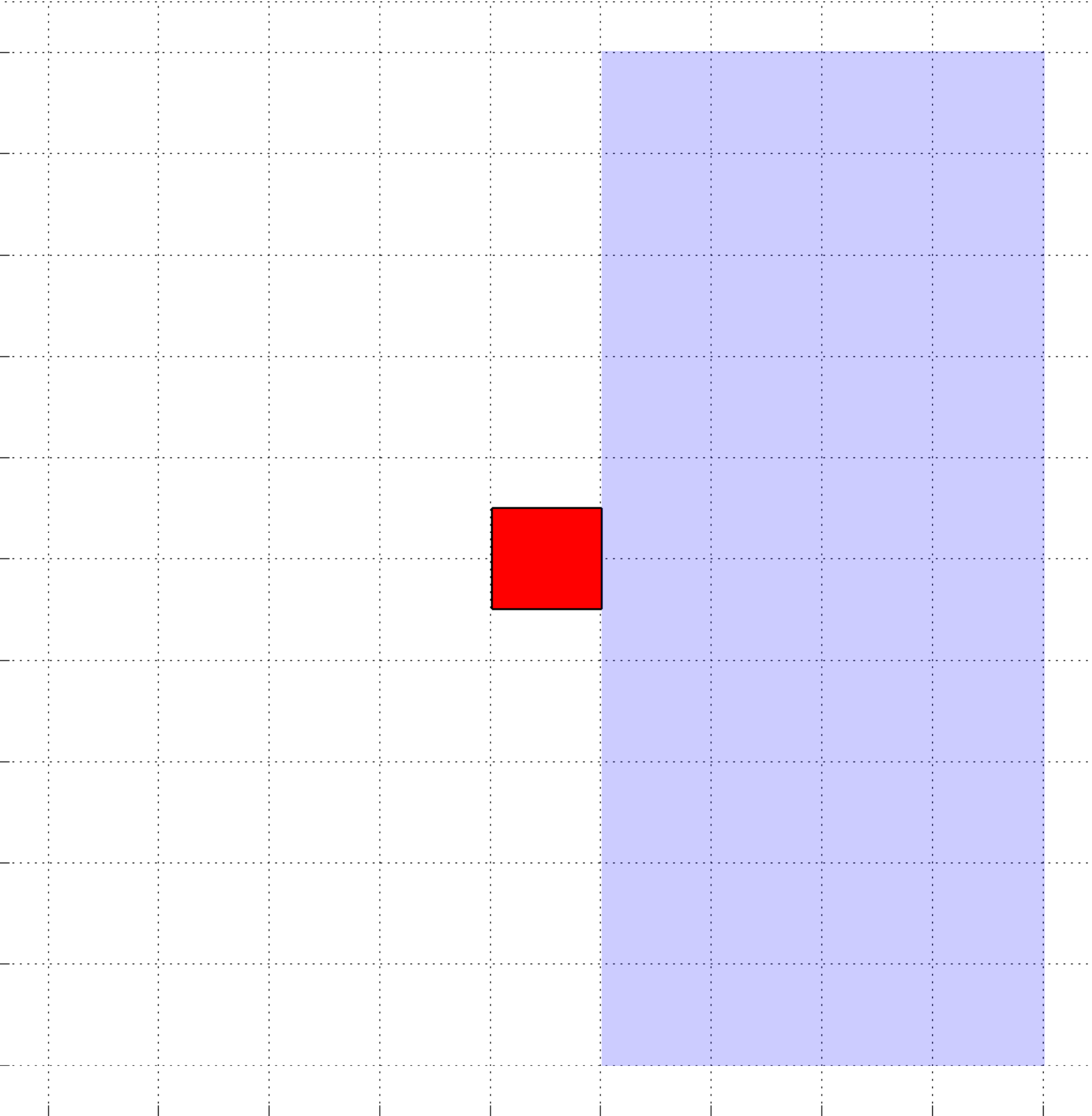}
}
\caption{The four component sets $\Y_1,\Y_2,\Y_3,\Y_4$ that comprise the non-convex output set $\Y = \Y_1 \cup\Y_2 \cup\Y_3 \cup \Y_4$. }
\label{fig:Y}
\end{figure}

The spacecraft is initially positioned at $y = [450,650]^T$ meters and the target position is the origin $y_f = [0,0]^T$.  The controller graph $\G = (\I,\E)$ was generated by gridding the output set $\mathrm{conv}(\Y)$ and computing a local controller~\eqref{eq:local-control} at each grid point $\bar{y}_i \in \Y$ using the methods presented in Section~\ref{sec:control-design-fixed} and Section~\ref{sec:control-design-sdp}.  For the first controller graph, the feedback gain $F = F_i$ is the linear quadratic regulator (LQR) with penalty matrices $Q = \mathrm{diag}(10^2,10^2,10^7,10^7)$ and  $R = (2\times10^7) I_2$ where $\mathrm{diag}(v)$ is a diagonal matrix with elements $v$ on the diagonal and $I_2 \in \Re^{2 \times 2}$ is the identity matrix.  The Lyapunov matrix $P \in \Re^{n_x \times n_x}$ is the corresponding solution to the discrete-time algebraic Riccati equation.  For the second controller graph, the feedback gains $F_i$ and Lyapunov matrices $P_i$ of the local controllers~\eqref{eq:local-control} were obtained by solving the semi-definite program~\eqref{eq:sdp} at each grid point.  

The projection $C\OCal_i$ of the PI sets $\OCal_i$ and the controller graph $\G$ for the fixed-gain controller are shown in Fig.~\ref{fig:pi-fixed} and Fig.~\ref{fig:graph-fixed} respectively.  The projected PI sets $C\OCal_i$ for the local controllers designed by the semi-definite program~\eqref{eq:sdp} are shown in Fig.~\ref{fig:pi-sdp}.  The controller graph is shown in Fig.~\ref{fig:graph-sdp}.  

From Fig.~\ref{fig:pi-fixed} and Fig.~\ref{fig:pi-sdp} it is evident that the PI sets for the controllers designed using the semi-definite program~\eqref{eq:sdp} are larger than the scaled PI sets for the fixed-gain controller.  As a result the controller graph shown in Fig.~\ref{fig:graph-sdp} has more edges than the controller graph for the fixed-gain controller shown in Fig.~\ref{fig:graph-fixed}.  Thus the path-planning algorithm has a larger number of potential sequences of local controllers that can maneuver the spacecraft to the origin.  Hence we expect the path planner using the SDP controllers to perform better than the fixed-gain controller.  

\begin{figure}[h]
\centering
\subfigure[]{
\includegraphics[width=0.35\columnwidth]{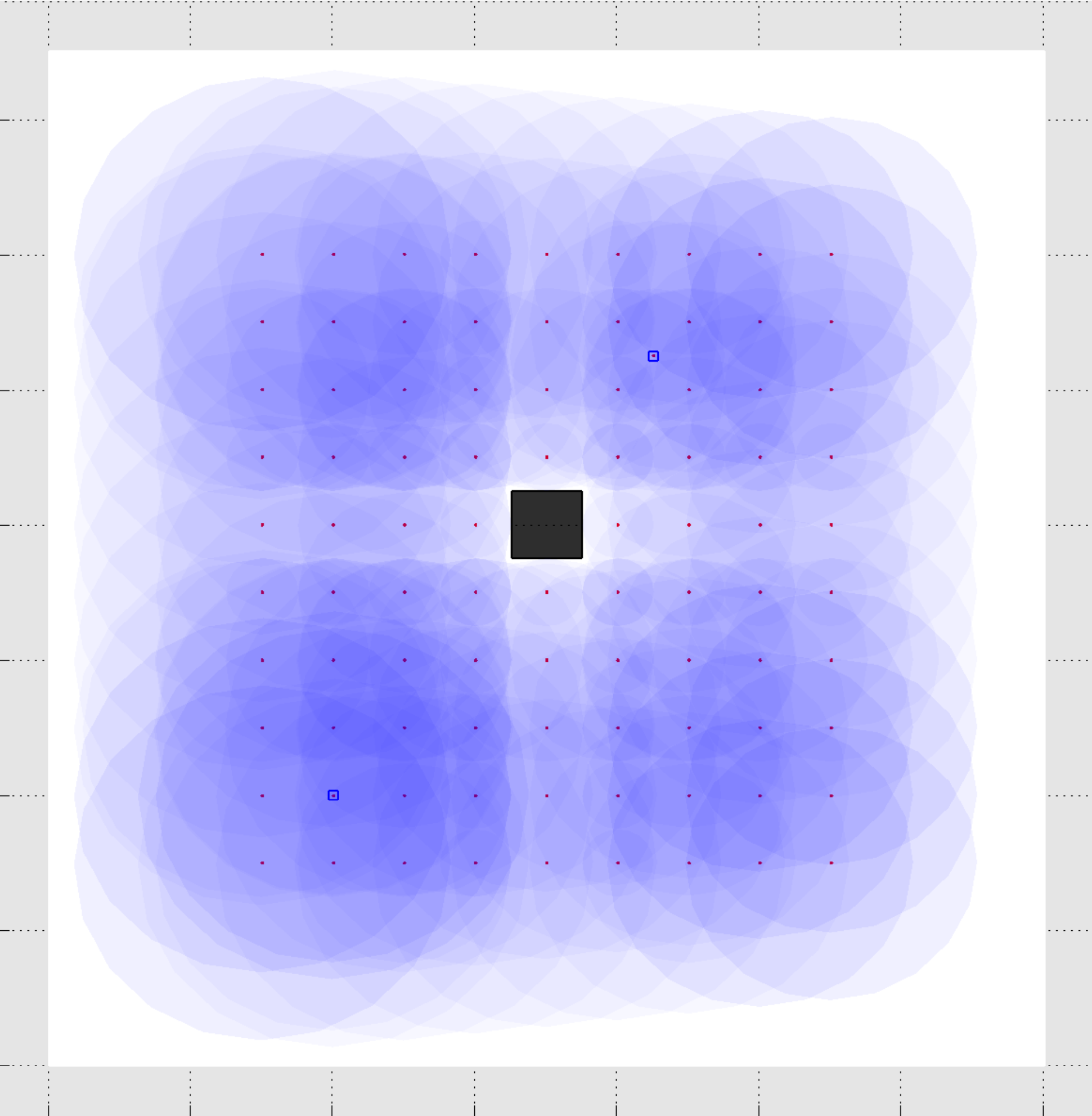}
\label{fig:pi-fixed}
} 
\subfigure[]{
\includegraphics[width=0.35\columnwidth]{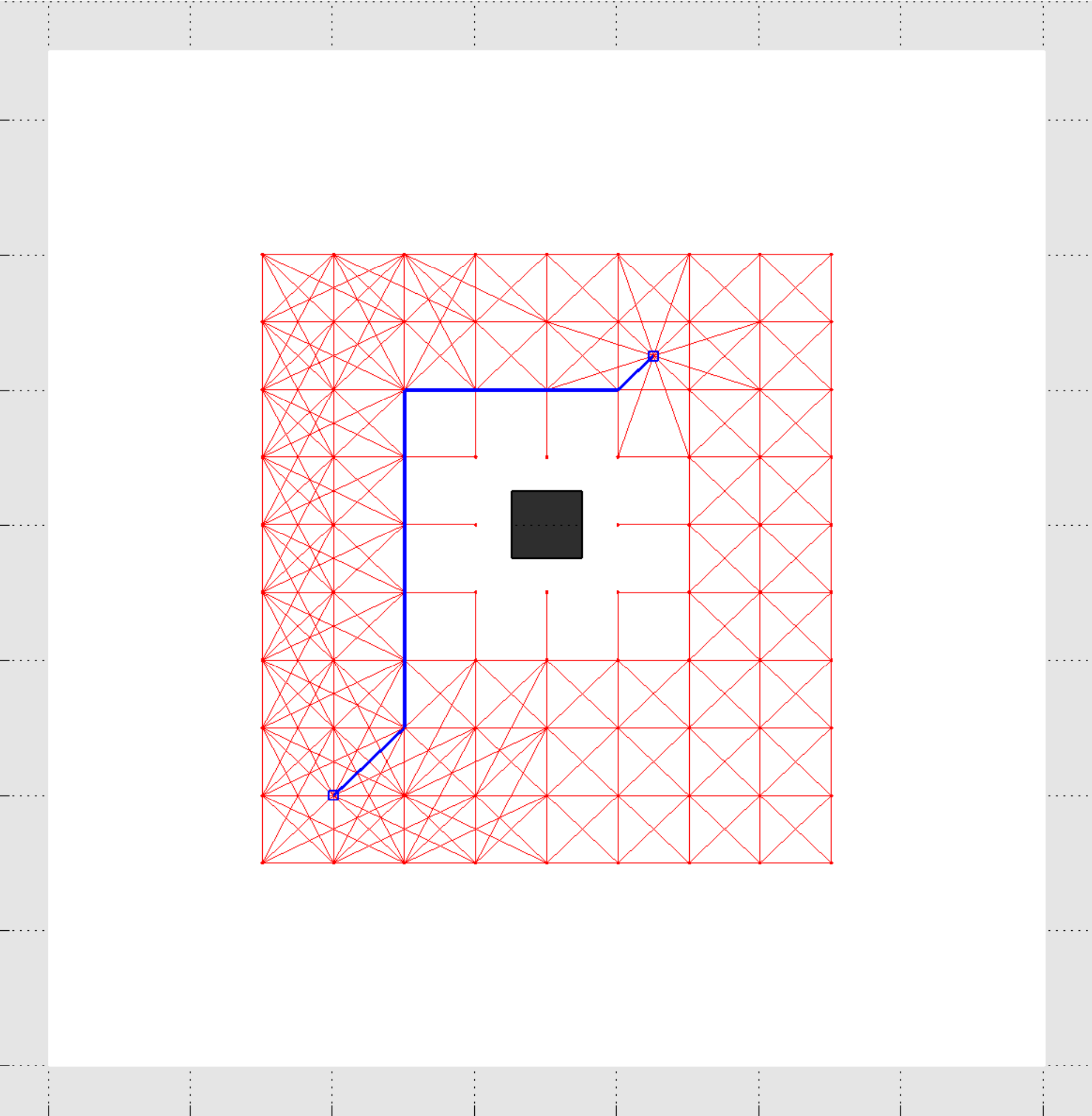}
\label{fig:graph-fixed}
} 
\subfigure[]{
\includegraphics[width=0.35\columnwidth]{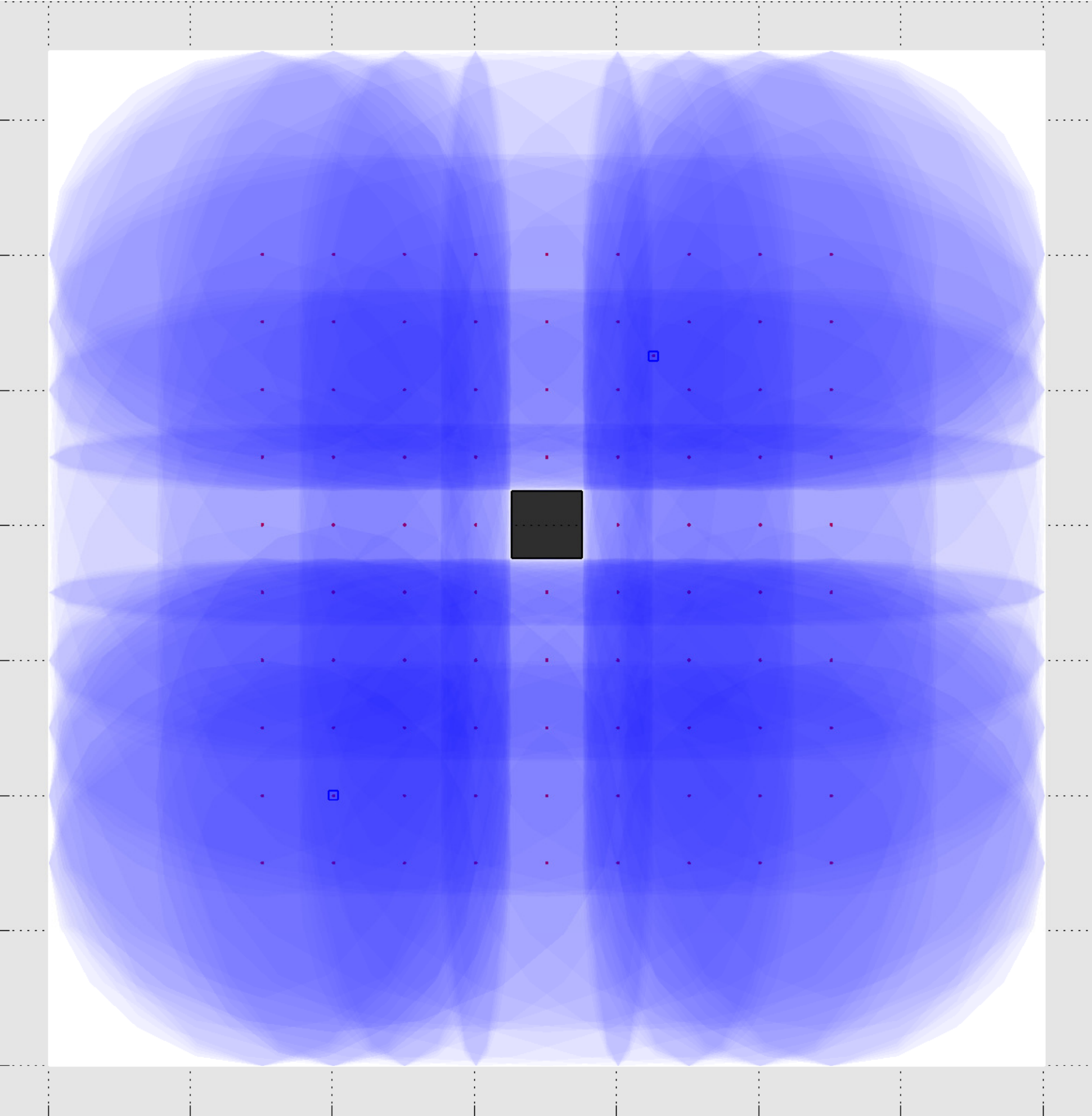}
\label{fig:pi-sdp}
} 
\subfigure[]{
\includegraphics[width=0.35\columnwidth]{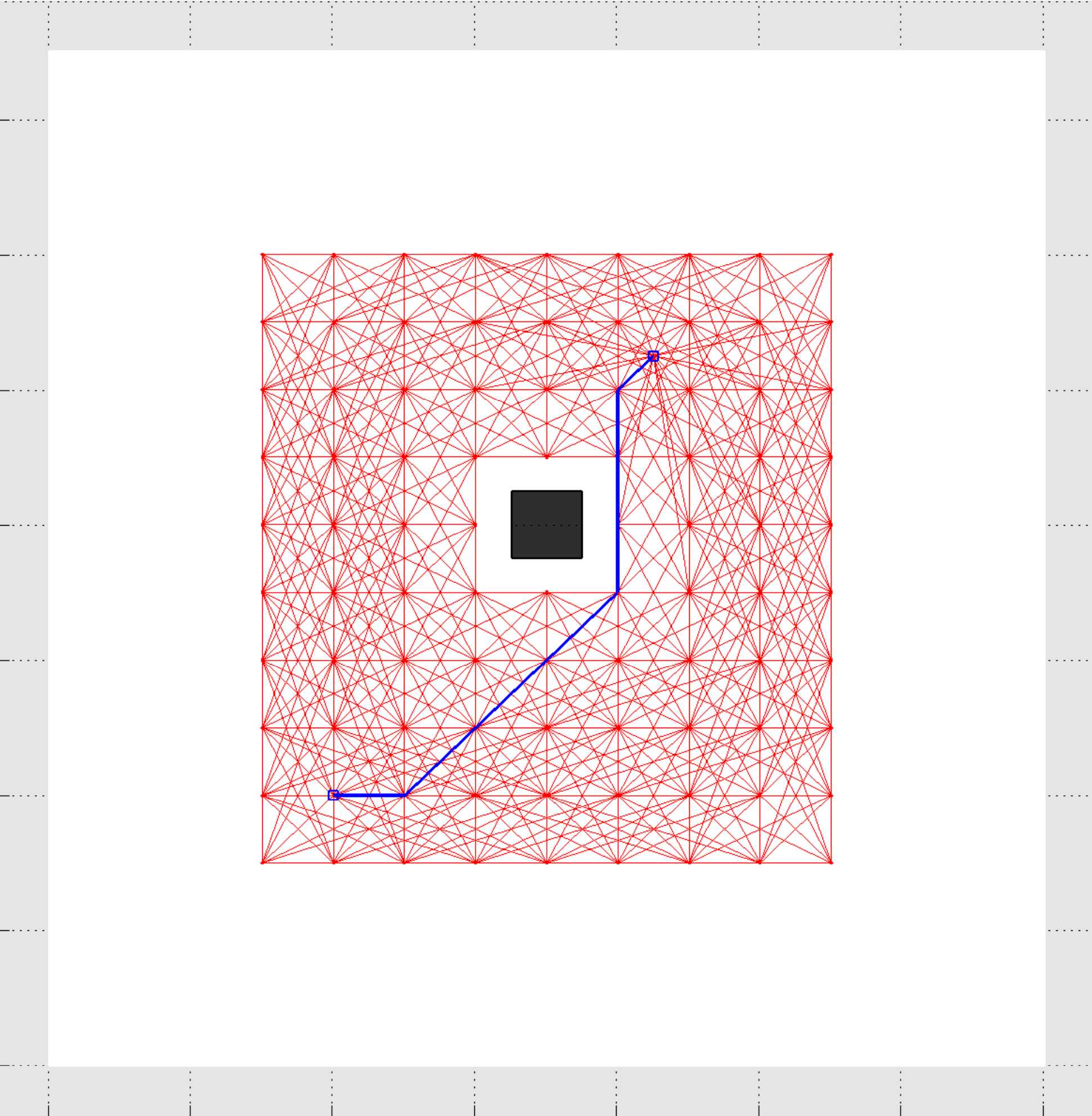}
\label{fig:graph-sdp}
}
\caption{(a) and (c) Projected PI sets $C \OCal_i$ for the local controllers designed using a fixed-gain controller and SDP controller respectively. (b) and (d) Controller graphs $\G = (\I,\E)$ for the fixed-gain and SDP controllers.}
\end{figure}

The edges $(i,j) \in \E$ of the controller graphs $\G = (\I,\E)$ are weighted $W_{ij}$ using the infinite-horizon cost-to-go from initial state $\bar{x}_i$ to the equilibrium state $\bar{x}_j$ under the local controller~\eqref{eq:local-control} given by
\begin{align*}
W_{ij} = (\bar{x}_i - \bar{x}_j)^T S_j (\bar{x}_i - \bar{x}_j)
\end{align*}
where $S_j = S$ is the solution to the discrete-time algebraic Riccati equation for the fixed-gain controller and $S_j$ is the unique positive definite solution to the Lyapunov equation
\begin{align*}
(A+BF_j)^T S_j (A+BF_j) - S_j = -Q - F_j^T R F_j.
\end{align*}
for the controller $F_j$ designed using the semi-definite program~\eqref{eq:sdp}.  

The optimal sequence of local controller nodes are shown in Fig.~\ref{fig:graph-fixed} and Fig.~\ref{fig:graph-sdp} respectively.   These controller sequences are used by Algorithm \ref{alg:path-planning} to maneuver the spacecraft to the origin.  The resulting output trajectories are shown in Fig.~\ref{fig:trajectories}.  The output trajectories are compared with the output trajectory produced by using a single LQR.  Under the single LQR, the spacecraft passed through the debris field before converging to the target position.   On the other hand, the output trajectories produced by Algorithm \ref{alg:path-planning} using the fixed-gain and SDP local controllers avoided the debris set while converging to the target position.  However the output trajectories took very different paths to reach the origin.  We evaluated these trajectories using the cost function cost-to-go to a neighborhood of the origin
\begin{align*}
J = \displaystyle{\sum}_{t=0}^{N} x(t)^T Q x(t) + u(t)^T R u(t)
\end{align*}
where $x(t)$ and $u(t)$ were the state and input trajectories, respectively, produced by Algorithm \ref{alg:path-planning}.  The fixed-gain local controllers had a cost $J_{base} = 1.14 \times 10^{10}$ and the SDP designed local controllers had a cost $J_{sdp} = 2.15 \times 10^9$.  Thus for this example problem the SDP designed local controller produced a more efficient path to the origin than the fixed-gain local controllers.  The disadvantage of the SDP-based control design is computation time.  The PI sets and controller graph for the fixed-gain controller required $0.36$ seconds to compute.  While the PI sets and controller graph for the SDP based controllers required $48$ seconds to compute.  Thus the SDP based control design required more than $2$ orders of magnitude more time to compute while providing approximately one order of magnitude improvement in performance.  

\begin{figure}[h]
\centering
\includegraphics[width=0.6\columnwidth]{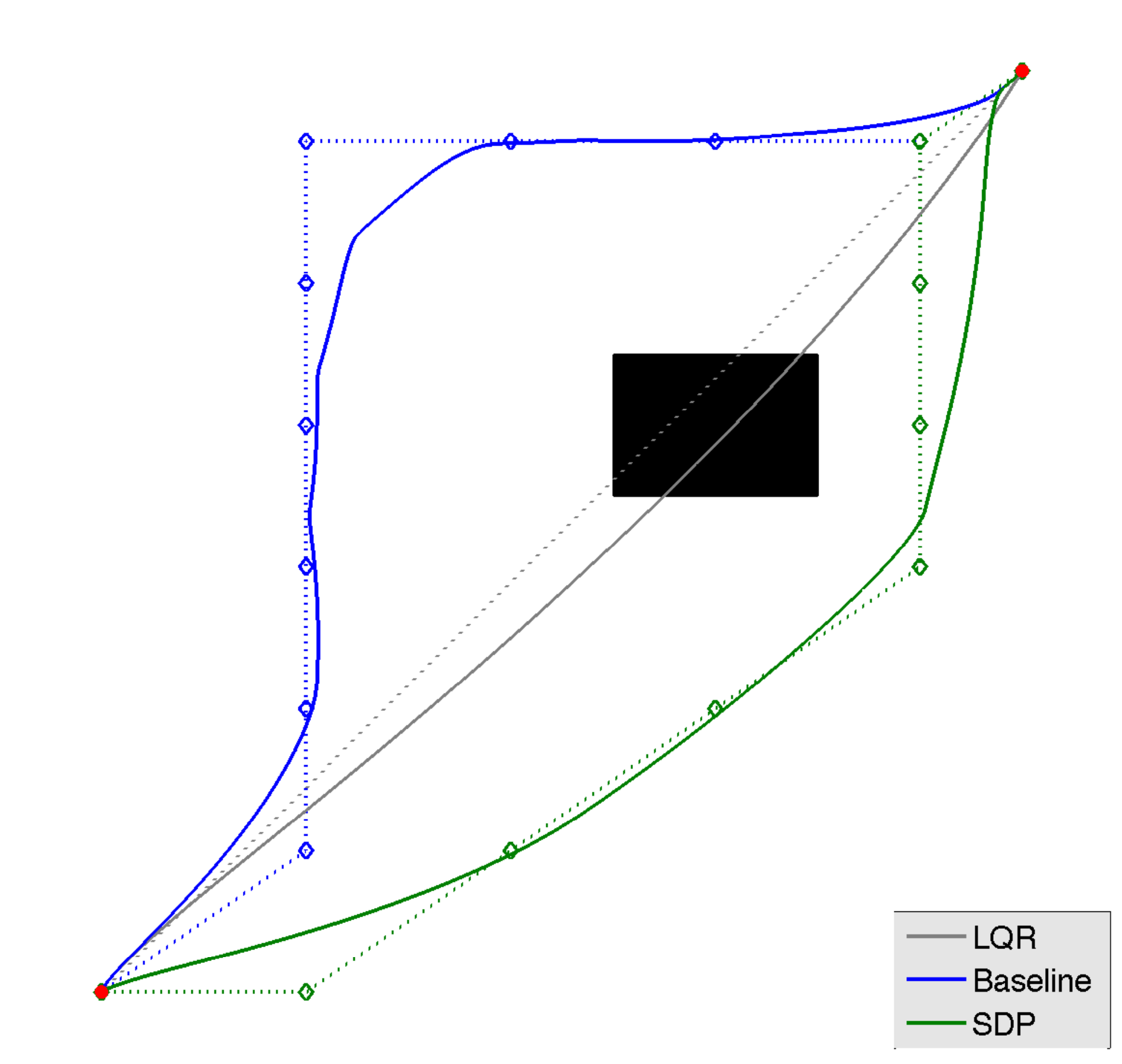}
\caption{Trajectories produced by Algorithm \ref{alg:path-planning} by searching the controller graph $\G = (\I,\E)$ and using a sequence of local controllers.  Shown are the sequences of equilibrium outputs $\bar{y}_{\sigma_1},\dots,\bar{y}_{\sigma_f}$ from the controller graph $\G$ and the resulting output trajectory $y(t)$ produced by switching between these equilibrium outputs.}
\label{fig:trajectories}
\end{figure}

The normalized thrusts for both controller graphs and the LQR controller are shown in Fig.~\ref{fig:input}.  The LQR violated the thrust constraints while the inputs $u(t)$ produced by Algorithm \ref{alg:path-planning} did not violate constraints.  Overall the normalized thrusts produced by the path-planning algorithm are small, hence requiring little propellant.  

\begin{figure}[h]
\centering
\includegraphics[width=0.75\columnwidth]{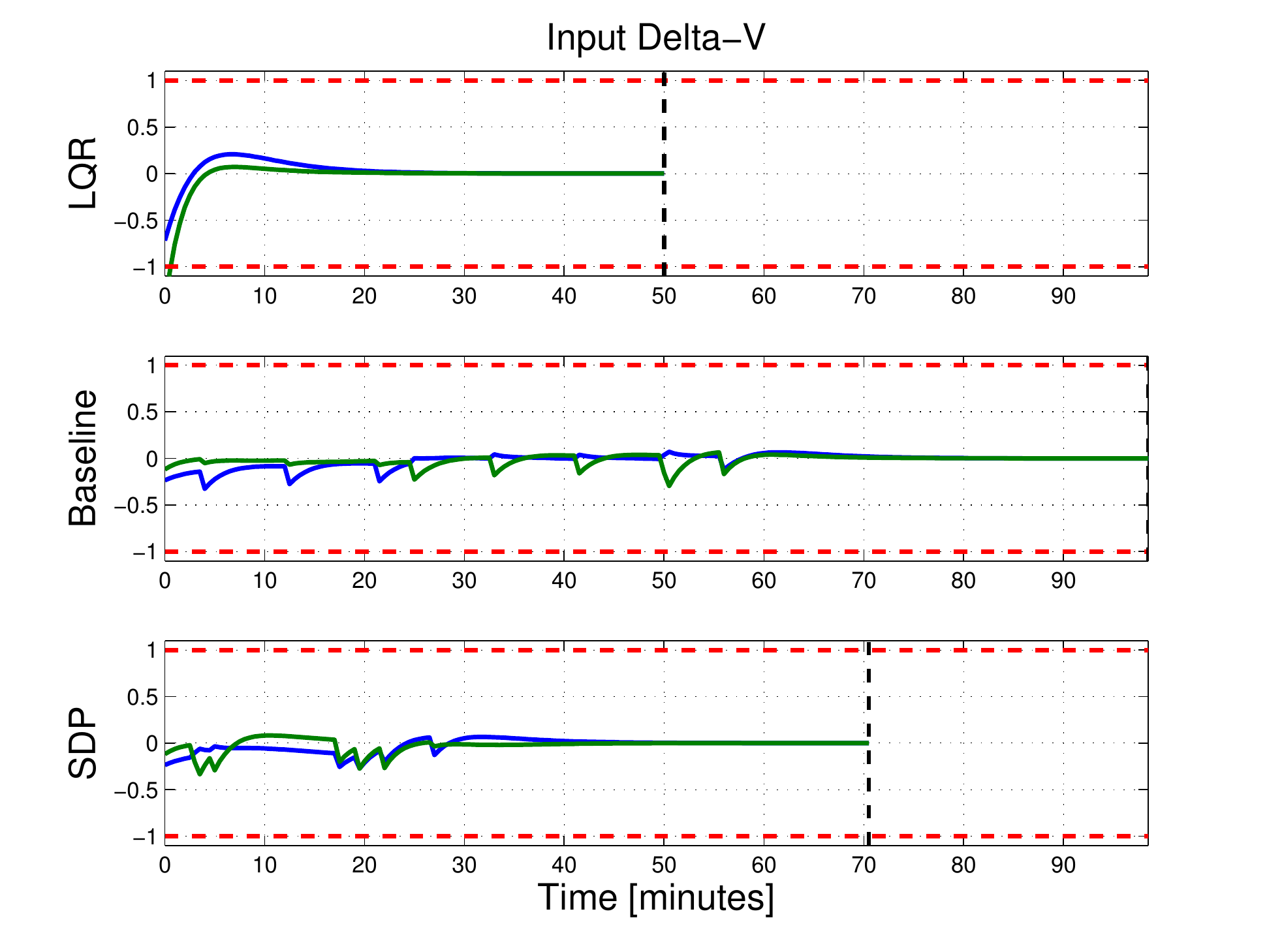}
\caption{Input trajectories $u(t)$ for $t \in \N$ produced by the linear quadratic regulator, path planner with the fixed-gain local controller, and path planner with the SDP local controllers.  The LQR controller violates input constraints.  The fixed-gain and SDP controllers satisfy input constraints.}
\label{fig:input}
\end{figure}

\section{Conclusions and Future Work}

This paper presented a path-planning algorithm for linear systems whose output is restricted to a union of polytopic sets.  The path-planning algorithm uses a graph to switch between a collection of local linear feedback controllers.  We presented two methods for computing the local controllers and their positive invariant sets.  The first used a fixed feedback gain and scaled the positive invariant set to satisfy state and input constraints.  The second used a semi-definite program to design both the controller gain and positive invariant set.  The path-planning algorithm was applied to the spacecraft docking problem.

Future work will address path-planning for systems with disturbances and modeling errors.  In addition we will study the problem of sampling the output space $\Y$ in a manner that guarantees that the controller graph $\G$ contains a path from the current output to the target output when one exist.  Finally we will study different weighting heuristics for the controller graph edges and how they effect closed-loop performance.  

\bibliographystyle{IEEEtran}
\bibliography{path-planning.bib}

\begin{thebibliography}{10}
\providecommand{\url}[1]{#1}
\csname url@samestyle\endcsname
\providecommand{\newblock}{\relax}
\providecommand{\bibinfo}[2]{#2}
\providecommand{\BIBentrySTDinterwordspacing}{\spaceskip=0pt\relax}
\providecommand{\BIBentryALTinterwordstretchfactor}{4}
\providecommand{\BIBentryALTinterwordspacing}{\spaceskip=\fontdimen2\font plus
\BIBentryALTinterwordstretchfactor\fontdimen3\font minus
  \fontdimen4\font\relax}
\providecommand{\BIBforeignlanguage}[2]{{%
\expandafter\ifx\csname l@#1\endcsname\relax
\typeout{** WARNING: IEEEtran.bst: No hyphenation pattern has been}%
\typeout{** loaded for the language `#1'. Using the pattern for}%
\typeout{** the default language instead.}%
\else
\language=\csname l@#1\endcsname
\fi
#2}}
\providecommand{\BIBdecl}{\relax}
\BIBdecl

\bibitem{reif1979complexity}
J.~H. Reif, ``Complexity of the mover's problem and generalizations,'' in
  \emph{Conference on Foundations of Computer Science}, 1979.

\bibitem{lavalle2006planning}
S.~LaValle, \emph{Planning Algorithms}.\hskip 1em plus 0.5em minus 0.4em\relax
  Cambridge University Press, 2006.

\bibitem{lavalle2001randomized}
S.~LaValle and J.~Kuffner, ``Randomized kinodynamic planning,'' \emph{The
  International Journal of Robotics Research}, 2001.

\bibitem{karaman2011sampling}
S.~Karaman and E.~Frazzoli, ``Sampling-based algorithms for optimal motion
  planning,'' \emph{The International Journal of Robotics Research}, 2011.

\bibitem{kuffner2002dynamically}
J.~Kuffner, S.~Kagami, K.~Nishiwaki, M.~Inaba, and H.~Inoue,
  ``Dynamically-stable motion planning for humanoid robots,'' \emph{Autonomous
  Robots}, 2002.

\bibitem{leonard2008perception}
J.~Leonard, J.~How, S.~Teller, M.~Berger, S.~Campbell, G.~Fiore, L.~Fletcher,
  E.~Frazzoli, A.~Huang, S.~Karaman \emph{et~al.}, ``A perception-driven
  autonomous urban vehicle,'' \emph{Journal of Field Robotics}, 2008.

\bibitem{perez2011asymptotically}
A.~Perez, S.~Karaman, A.~Shkolnik, E.~Frazzoli, S.~Teller, and M.~Walter,
  ``Asymptotically-optimal path planning for manipulation using incremental
  sampling-based algorithms,'' in \emph{Intelligent Robots and Systems (IROS)},
  2011.

\bibitem{frazzoli2003quasi}
E.~Frazzoli, ``Quasi-random algorithms for real-time spacecraft motion planning
  and coordination,'' \emph{Acta Astronautica}, 2003.

\bibitem{starek2016real}
J.~Starek, G.~Maher, B.~Barbee, and M.~Pavone, ``Real-time, fuel-optimal
  spacecraft motion planning under {Clohessy-Wiltshire-Hill} dynamics,'' in
  \emph{IEEE Aerospace Conference}, 2016.

\bibitem{vinter2010optimal}
R.~Vinter, \emph{Optimal control}.\hskip 1em plus 0.5em minus 0.4em\relax
  Springer Science, 2010.

\bibitem{Weiss2014}
A.~Weiss, C.~Petersen, M.~Baldwin, R.~Erwin, and I.~Kolmanovsky, ``Safe
  positively invariant sets for spacecraft obstacle avoidance,'' \emph{Journal
  of Guidance, Control, and Dynamics}, 2015.

\bibitem{Kolmanovsky1998}
I.~Kolmanovsky and E.~G. Gilbert, ``Theory and computation of disturbance
  invariant sets for discrete-time linear systems,'' \emph{Mathematical
  problems in engineering}, 1998.

\bibitem{Kerrigan2000}
E.~Kerrigan, ``Robust constraints satisfaction: Invariant sets and predictive
  control,'' Ph.D. dissertation, Dep. of Engineering, University of Cambridge,
  2000.

\bibitem{arslan2016sampling}
O.~Arslan, K.~Berntorp, and P.~Tsiotras, ``Sampling-based algorithms for
  optimal motion planning using closed-loop prediction,'' \emph{arXiv preprint
  arXiv:1601.06326}, 2016.

\bibitem{McConley2000}
W.~McConley, B.~Appleby, M.~Dahleh, and E.~Feron, ``A computationally efficient
  lyapunov-based scheduling procedure for control of nonlinear systems with
  stability guarantees,'' \emph{Transactions on Automatic Control}, 2000.

\bibitem{Blanchini2004}
F.~Blanchini, F.~Pellegrino, and L.~Visentini, ``Control of manipulators in a
  constrained workspace by means of linked invariant sets,'' \emph{Journal of
  Robust and Nonlinear Control}, 2004.

\bibitem{FullPaper}
C.~Danielson, A.~Weiss, K.~Berntorp, and S.~{Di Cairano}, ``Path planning using
  positive invariant sets,'' \emph{arXiv preprint arXiv:TBD}, 2016.

\bibitem{Borrelli2009}
F.~Borrelli, A.~Bemporad, and M.~Morari, \emph{Predictive Control for Linear
  and Hybrid Systems}, 2012.

\bibitem{SeDuMi}
J.~Sturm, ``Using {SeDuMi} 1.02, a {MATLAB} toolbox for optimization over
  symmetric cones,'' \emph{Optimization Methods and Software}, 1999.

\bibitem{SDPT3}
K.~Toh, M.~Todd, and R.~Tutuncu, ``{SDPT3} --- a {MATLAB} software package for
  semidefinite programming,'' \emph{Optimization Methods and Software}, 1999.

\bibitem{Wie2010}
B.~Wie, \emph{Spacecraft Dynamics and Control}.\hskip 1em plus 0.5em minus
  0.4em\relax AIAA, 2010.

\end{thebibliography}

\end{document}